\def\@email#1#2{%
 \endgroup
 \patchcmd{\titleblock@produce}
  {\frontmatter@RRAPformat}
  {\frontmatter@RRAPformat{\produce@RRAP{*#1\href{mailto:#2}{#2}}}\frontmatter@RRAPformat}
  {}{}
}%
\newtheorem{theorem}{Theorem}[section]
\newtheorem*{definition}{Definition}
\newtheorem{corollary}{Corollary}[theorem]
\newtheorem{proposition}{Proposition}[theorem]
\newtheorem{lemma}{Lemma}[theorem]
\begin{document}

\preprint{AIP/123-QED}

\title[Functional Measures Associated to Operators]{Functional Measures Associated to Operators}
\author{{Luis A.} {Cede\~no-P\'erez}$^1$ and Hernando Quevedo$^{1,2,3}$}


\affiliation{ Instituto de Ciencias Nucleares, Universidad Nacional Aut\'onoma de M\'exico, AP  70543, Mexico City, Mexico }
\affiliation{ Dipartimento di Fisica and Icra, Universit\`a di Roma “La Sapienza”, Roma, Italy }
\affiliation{ Al-Farabi Kazakh National University, Al-Farabi av. 71, 050040 Almaty, Kazakhstan}

\date{\today}

\begin{abstract}
We show that every operator in $L^{2}$ has an associated measure on a space of functions and prove that it can be used to find  solutions to abstract Cauchy problems, including partial differential equations. We find explicit formulas to compute the integral of functions with respect to this measure and develop approximate formulas in terms of a perturbative expansion. We show that this method can be used to represent solutions of classical equations, such as the diffusion and Fokker-Planck equations, as Wiener and Martin-Siggia-Rose-Jansen-de Dominics integrals, and propose an extension to paths in infinite dimensional spaces.
\end{abstract}


\maketitle



\tableofcontents

\section{Introduction}
\label{sec:int}

Path integration was originally formulated in a rigorous mathematical manner  by Wiener \cite{wiener1923differential,wiener1924average} to study the Brownian motion and phenomena related to diffusion. However, path integration became a common tool of theoretical physics only after Feynman introduced the concept of path integrals in quantum  mechanics as an alternative to the standard procedure of canonical quantization \cite{feynman1948space}. 

From a physical point of view, Feynman's path integral represents the transition amplitude of a particle between two states as an integral over all possible trajectories between these two states, where each trajectory is endowed with an appropriate weight. This simple idea gave rise to one of the preferred tools for performing calculations in quantum mechanics, quantum field theory, and overall in physics, in particular, to study stochastic and irreversible processes; see, for instance, \cite{kleinert2009path,zinn2021quantum,chaichian2018path}.

Further developments of Wiener's path integration include the works by Onsager and Machlup  on irreversible processes \cite{onsager1953fluctuations,machlup1953fluctuations}. Moreover, using the operator formulation proposed by Martin, Siggia, and Rose \cite{martin1973statistical}, Janssen and De Dominics \cite{janssen1976lagrangean,dominicis1976techniques,de1978field} developed an alternative formalism for path integration (known as the MSRJD integral), which  opened the possibility of investigating diverse phenomena associated with non-equilibrium thermodynamics. Moreover, the MSRJD path integration can be understood as a functional representation of classical differential equations, such as the Fokker-Planck and Langevin equations. 

The role of mathematics in these developments has been crucial in order to establish in a rigorous way the notion of functional measures, which are essential for the correct formulation of path integration \cite{kuo2006gaussian,bastianelli2006path}. In particular, it turns out that functional measures can be related to the abstract Cauchy problem for a particular differential operator.  
In this work, we will focus on the study of functional measures by analyzing the characteristics and properties of the  corresponding differential operators.

\subsection{Functional Measures}

A functional measure is a measure in a space of functions. Functional measures play a fundamental role in quantum mechanics and statistical physics, where they were introduced by Feynman in the form of his famous path integral. Feynman's development of functional integration was mathematically non-rigorous but led to very powerful results. In the case of quantum mechanics Feynman's integral is not the integral with respect to a measure, however, it converges in $L^{2}$. The first rigorous non-trivial functional measure was introduced by Wiener in stochastic processes to study Brownian motion. This measure, known as the Wiener measure, can be used to solve partial and stochastic differential equations, particularly, the diffusion equation, which plays a prominent role in statistical physics.

Generalizations of the Wiener measure were explored for its applications in partial and stochastic differential equations, an example being the Wiener-Yeh measure in spaces of functions of two variables 
\cite{yeh1960wiener,bogachev1998gaussian}. However, functional measures were developed the most in the context of analysis and probability, where they were incorporated into the theory of Gaussian measures in Banach spaces. The most notable results of the theory are Prohorov's theorem on the Fourier transform of Gaussian measures and the development of abstract Wiener spaces; see  
\cite{bogachev1998gaussian}
or \cite{kuo2006gaussian}. These results lie at the very foundation of quantum and statistical field theory, where they are used in the field integral formulation of the theories  \cite{glimm2012quantum}. 

The construction of measures is usually a delicate and arduous procedure, and the construction of the Wiener and Wiener-Yeh measure is even more so. It is because of this that there are not many concrete generalizations of the Wiener measure. While Prohorov's theorem does prove the existence of Gaussian measures in Banach spaces, it does this in an abstract way that does not give a concrete form to the measure. In this work, we present a generalization of the Wiener measure that provides a concrete realization of the constructed measure. Furthermore, we do this in a way that preserves the connection with partial differential equations.

\begin{figure}
\includegraphics[scale=0.7]{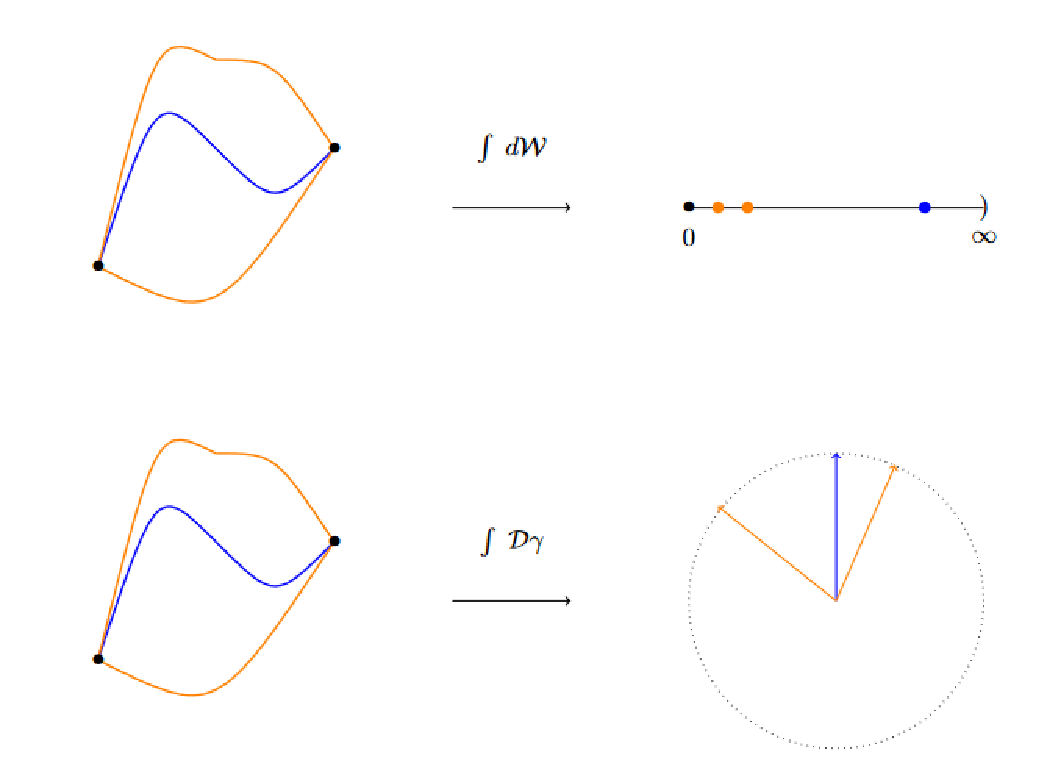}
\caption{Path integral interpretation for the Wiener and Feynman path integrals. In the Wiener integral the path of least action is the one that contributes the most to the average. In the Feynman integral every path has a contribution of the same norm but different phase.}
\end{figure}

\subsection{A Brief Look at the Wiener Measure}

The Wiener measure $\mathcal{W}$ is defined in the space $C(I)$ of continuous functions from a possibly unbounded interval $I$ to an $n$-dimensional Euclidean space, equipped with the supremum norm and its Borel $\sigma$-algebra. The Wiener measure is uniquely defined by its values in cylindrical sets, that is, sets of the form
\begin{equation*}
    \mathcal{C} = \{\gamma\in C(I)\;|\;\gamma(t_{i})\in V_{i},\;i\leq n\},
\end{equation*}
where each $V_{i}$ is a Borel set in $\mathbb{R}^{n}$ and $\{t_{i}\}_{i=1}^{n}$ is a finite set in $I$. The Wiener measure of this set is
\begin{equation*}
    \mathcal{W}(\mathcal{C}) = \int_{V_{1}} \cdots \int_{V_{n}} \prod_{i=1}^{n}\varphi(x_{i}-x_{i-1},t_{i}-t_{i-1})\;dx_{1}\ldots dx_{n},
\end{equation*}
where
\begin{equation}\label{FunciónWiener}
    \varphi(x-y,t) = \frac{1}{\sqrt{4\pi Dt}} \mbox{\Large\(e^{-\frac{|x-y|^{2}}{4Dt}}\)}.
\end{equation}

\begin{figure}[t]
\centering
\includegraphics{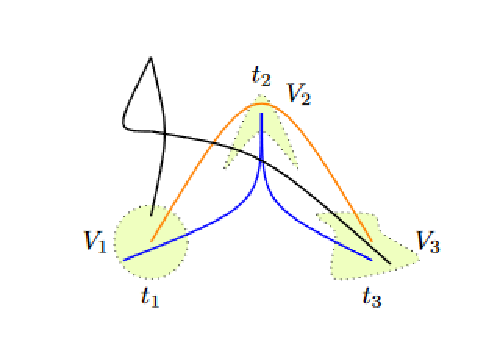}
\caption{Cylindrical set of paths in $\mathbb{R}^{2}$ and sample paths.}
\end{figure}

\begin{figure}[t]
\centering
\includegraphics{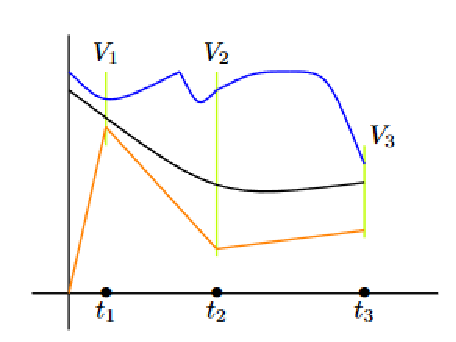}
\caption{Cylindrical set of functions in $\mathbb{R}$, including the parameter space.}
\end{figure}

The classical construction of the Wiener measure consists in verifying that the set function $\mathcal{W}$ is $\sigma$-additive on the class of cylindrical sets and extending it through Caratheodory's theorem \cite{kuo2006gaussian}. This procedure is both complicated and difficult to generalize. An alternative method is to use either the Riesz-Markov theorem or Kolmogorov's extension theorem to construct a measure on a space of rough paths. One then shows that the space of continuous paths has non-zero measure and defines the Wiener measure as the measure of the intersection with this set. To use Kolmogorov's 
extension theorem, one must define a family of measures and verify the compatibility condition. This follows from the following equation, which is satisfied by the function $\varphi$ defined in (\ref{FunciónWiener})
\begin{equation*}
    \int\varphi(x-y,t-s)\varphi(y-z,s-u)\;dy = \varphi(x-z,t-u).
\end{equation*}
This is known as Kolmogorov's compatibility equation, and functions satisfying this equation are known as transition functions.

It is crucial that the Wiener measure is defined in a space of continuous and not rough paths since it allows integrals of functions of paths to be well defined and thus integrated. Of particular importance are expressions of the form
\begin{equation}\label{Integrale-int}
    \int_{C(I)} e^{-\int_{I}V(\gamma(t))\;dt}\;d\mathcal{W}(\gamma),
\end{equation}
since they can be shown to be the solution of the differential equation
\begin{equation*}
    \partial_{t}f = \Delta f + V f.
\end{equation*}

The construction we present here is a generalization of the construction of the Wiener measure with the Riesz-Markov theorem. Both the Riesz-Markov and Kolmogorov's extension theorem can be used for this. The Riesz-Markov theorem has the significant advantage of yielding a measure with a prescribed integral, but Kolmogorov's theorem gives conceptual clarity on what is missing to make possible said generalization.

\subsection{Generalizing the Wiener Measure}

As we previously noted, in order to generalize the construction of the Wiener measure it is necessary that the compatibility condition of Kolmogorov's extension theorem be satisfied. In this case, this condition is simplified to finding transition functions. It is not clear, however, how such functions arise, if at all. This crucial question is answered by noting that the function $\varphi$ defined in Eq.(\ref{FunciónWiener}) satisfies the differential equation
\begin{equation*}
    \partial_{t}\varphi  = \Delta \varphi 
\end{equation*}
with initial condition $\varphi(x-x_{0},0) = \delta_{x_{0}}(x)$. Functions that satisfy this property for an operator $A$ instead of the Laplacian $\Delta$ are called fundamental solutions. This is not coincidental, as we will show that under certain hypotheses on the operator $A$ its fundamental solution must be a transition function, which may take distribution values. 

The relations between operators, fundamental solutions, and transition functions are detailed in Sec. \ref{sec:const}.
In the same section, we proceed to show that every transition function $\varphi$ defines a measure $\mu_{\varphi}$ on a space of rough paths. Since the fundamental solutions of certain operators are transition functions, this shows that there is a correspondence between operators of this certain kind and functional measures. The next question to be answered is how big is the set of continuous paths with respect to $\mu_{\varphi}$. We prove that if $\varphi$ is positive and decays quickly (in a way to be made precise) then the set of continuous paths has full measure with respect to $\mu_{\varphi}$. This will always be true if $\varphi$ is the fundamental solution of an operator of a certain kind.

In Sec. \ref{sec:diff}, we study the relation between the functional measure $\mu_{\varphi}$ associated to an operator $A$ and the abstract Cauchy problem
\begin{equation*}
    \partial_{t}f = A(f) + M_{V} f,
\end{equation*}
where $M_{V}$ is the multiplication operator by the function $V$. In the case of the Wiener measure, this is usually done by using the Lie-Trotter product formula for operator semigroups. This approach has the disadvantage of imposing some conditions on the operator semigroup $\{e^{tA}\}_{t\geq 0}$. We give a more general proof that has the advantage of also developing a perturbative expansion which can be used to approximate integrals of the form (\ref{Integrale-int}).

In Sec. \ref{sec:examples}, we present some examples, namely, 
the diffusion equation, the Fokker-Planck equation, and the n-th derivative operator. We also study the case of a general one-dimensional mechanical Lagrangian  and discuss the possibility of considering higher-order Lagrangians. Notably, we show that this construction generalizes the Wiener measure and the MSRJD integral used in non-equilibrium thermodynamics.

Finally, in Sec. \ref{sec:inf}, we discuss possible generalizations to spaces of paths in infinite dimensional spaces. This requires generalizing the theory of distributions to infinite dimensions. While our discussion is not thorough or complete it brings a possible connection with the theory of LF spaces and abstract harmonic analysis.

\section{Construction of Functional Measures}
\label{sec:const}

\subsection{Transition Functions}

The first step is to define the concepts of fundamental solution and transition function and show that under certain conditions fundamental solutions are also transition functions.

\begin{definition}
Let $A$ be an operator in $L^{2}(\mathbb{R}^{n})$. A \textbf{fundamental solution} for $A$ is a distribution solution $\varphi(x,t)$ to the abstract Cauchy problem
\begin{equation*}
    \partial_{t}\varphi = A(\varphi)
\end{equation*}
with initial condition $\varphi(x - x_{0},0) = \delta_{x_{0}}(x)$.
\end{definition}
This definition is rather imprecise in order to introduce its usual notation. Formally, we require $\varphi(x-y,t)$ to define a family of distributions $\{\varphi_{t}\}_{t\geq 0}$ according to
\begin{equation*}
    \varphi_{t}(x-y) = \varphi(x-y,t),
\end{equation*}
with $\varphi_{0}(x-y) = \delta_{y}(x)$. This implies that the map
\begin{equation*}
    \begin{array}{ccc}
    \mathbb{R}^{+} & \longrightarrow & D'(\mathbb{R}^{n})\\
    t & \longmapsto & \varphi_{t}
    \end{array}
\end{equation*}
is properly defined. Since both its domain and its codomain are topological vector spaces we can require this map to be Gateaux differentiable. We then denote its Gateaux derivative by $\partial_{t}\varphi$, which shows that the abstract Cauchy problem is properly defined. We are also assuming that the operator $A$ has a continuous extension from its domain to $D'(\mathbb{R}^{n})$, which is the case for differential operators.

We now characterize the fundamental solutions of generators of $C_{0}$-semigroups.

\begin{proposition}[Characterization of Fundamental Solutions]
Let $A$ be an operator in $L^{2}(\mathbb{R}^{n})$ that generates a $C_{0}$-semigroup. $A$ admits a fundamental solution if and only if there exists a family of distributions $(\phi(x,t))_{t\in(0,\infty)}$ such that
\begin{equation*}
    e^{tA}f(x) = \int_{\mathbb{R}^{n}} \phi(y-x,t)f(y)\;dy.
\end{equation*}
In any case, the fundamental solution $\varphi$ is given by
\begin{equation*}
    \varphi(x,t) = \begin{cases}
        \phi(x,t) &\textrm{if } t > 0,\\
        \delta_{x} &\textrm{if } t=0.
    \end{cases}
\end{equation*}
\end{proposition}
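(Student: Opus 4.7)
My plan is to prove both directions by exploiting the fact that the Cauchy problem $\partial_t u = A(u)$ with initial datum $f$ has a unique solution given by $u(\cdot,t) = e^{tA} f$ whenever $A$ generates a $C_0$-semigroup. The two candidate constructions, $\varphi$ coming from the fundamental solution and $\int \phi(y-x,t) f(y)\,dy$ coming from the integral representation, will be shown to solve the same abstract problem, so uniqueness closes the argument in each direction.

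For the reverse direction ($\Leftarrow$), I would start from the assumed representation and set $\varphi(x,t) = \phi(x,t)$ for $t>0$ and $\varphi(x,0) = \delta_{0}$. The initial condition follows from strong continuity of the semigroup: for every Schwartz function $f$, the identity $\int \phi(y-x,t) f(y)\,dy = e^{tA} f(x) \to f(x)$ as $t\to 0^{+}$ implies $\phi(\cdot - x, t) \to \delta_{x}$ in $D'(\mathbb{R}^{n})$, which is the distributional initial condition $\varphi_{0} = \delta_{0}$. For the PDE, I would pick $f$ in the domain of $A$ and use $\partial_t e^{tA} f = A e^{tA} f$; transferring the $t$-derivative and the operator $A$ through the integral against $f$, using the continuity of $A$ on $D'(\mathbb{R}^{n})$ stipulated in the setup, yields $\partial_t \varphi = A(\varphi)$ in the Gateaux sense on $D'(\mathbb{R}^{n})$.

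For the forward direction ($\Rightarrow$), given a fundamental solution $\varphi$ I define $\phi(x,t) = \varphi(x,t)$ for $t>0$ and form the convolution $u(x,t) = \int \phi(y-x,t) f(y)\,dy$ for $f$ in a dense subclass of $L^{2}$, say Schwartz or test functions. Linearity and continuity of $A$, combined with the fact that $\varphi$ satisfies the Cauchy problem with $\varphi_{0} = \delta_{0}$, let me conclude that $u$ solves $\partial_t u = A(u)$ with $u(\cdot,0) = f$ (the initial condition coming from the sifting property of $\delta_{0}$). Uniqueness of solutions in $C_0$-semigroup theory then forces $u(\cdot,t) = e^{tA} f$, establishing the integral representation on this dense class, and boundedness of $e^{tA}$ extends it to all $f \in L^{2}(\mathbb{R}^{n})$.

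The main obstacle is the rigorous interchange of $A$ and $\partial_t$ with the integration in $y$, which tacitly relies on the translation-invariant (convolution) form of the kernel $\phi(y-x,t)$ and on the assumed continuous extension of $A$ to $D'(\mathbb{R}^{n})$. A subtler point appears in the uniqueness step of the forward direction: the Cauchy problem in $L^{2}$ is standard, but one must also interpret the fundamental solution itself as a distributional solution with datum $\delta_{0} \notin L^{2}$, which is precisely where the $C_{0}$-semigroup hypothesis and the continuous extension hypothesis both play a role.
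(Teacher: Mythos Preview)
Your proposal is correct and, for the forward direction ($\Rightarrow$), essentially identical to the paper's argument: both of you form the convolution $u(x,t)=\int \varphi(y-x,t)f(y)\,dy$, verify it solves the Cauchy problem with datum $f$, and invoke uniqueness from the $C_0$-semigroup to identify $u=e^{tA}f$.

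For the reverse direction ($\Leftarrow$) there is a small but genuine difference. The paper simply substitutes $f=\delta_{x_0}$ into the assumed representation $e^{tA}f(x)=\int\phi(y-x,t)f(y)\,dy$ and reads off that $\phi(x_0-x,t)$ solves $\partial_t\varphi=A\varphi$ with initial datum $\delta_{x_0}$. You instead argue via test functions: strong continuity of the semigroup gives $\phi(\cdot-x,t)\to\delta_x$ in $D'$ as $t\to 0^+$, and the semigroup differential equation $\partial_t e^{tA}f=Ae^{tA}f$ for $f\in D(A)$ is pushed through the integral to obtain $\partial_t\phi=A\phi$ distributionally. Your route is more careful, since $\delta_{x_0}\notin L^2$ and the paper's substitution is formal; the cost is that you must justify the interchange of $A$ and $\partial_t$ with the $y$-integration, which you correctly flag as the main technical point (and which the paper also silently assumes in its forward direction).
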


The central statement of the proposition is that if $A$ admits a fundamental solution then its generated semigroup is basically determined by the fundamental solution. Note that an operator can admit a fundamental solution and not generate a $C_{0}$-semigroup, which implies that both  conditions need to be verified independently.

\begin{proof}
If $A$ admits a fundamental solution $\varphi(x,t)$ we define
\begin{equation*}
    e_{f}(x,t) = \int_{\mathbb{R}^{n}} \varphi(y-x,t)f(y)\;dy.
\end{equation*}
Then
\begin{align*}
    A(e_{f}(x,t)) &= A\int_{\mathbb{R}^{n}} \varphi(y-x,t)f(y)\;dy\\
    &= \int_{\mathbb{R}^{n}} A(\varphi(y-x,t))f(y)\;dy\\
    &= \int_{\mathbb{R}^{n}} \partial_{t}\varphi(y-x,t)f(y)\;dy\\
    &= \partial_{t}\int_{\mathbb{R}^{n}} \varphi(y-x,t)f(y)\;dy\\
    &= \partial_{t}e_{f}(x,t),
\end{align*}
where we assumed that $A$ acts on the variable $x$. Since
\begin{align*}
    e_{f}(x,0) &= \int_{\mathbb{R}^{n}} \varphi(y-x,0)f(y)\;dy\\
    &= \int_{\mathbb{R}^{n}} \delta_{x}(y)f(y)\;dy\\
    &= f(x),
\end{align*}
uniqueness of solutions implies
\begin{align*}
    e^{tA}f(x) &= e_{f}(x,t)\\
    &= \int_{\mathbb{R}^{n}} \varphi(y-x,t)f(y)\;dy.
\end{align*}
Note that since $A$ generates a $C_{0}$-semigroup the solution is indeed unique.

Conversely, if
\begin{equation*}
    e^{tA}f(x) = \int_{\mathbb{R}^{n}} \phi(y-x,t)f(y)\;dy,
\end{equation*}
then, substituting $f(x) = \delta_{x_{0}}(x)$ we find that $\phi(x_{0}-x,t)$ is a distribution solution to the differential equation $\partial_{t}\varphi = A\varphi$ such that $\phi(x_{0}-x,0) = \delta_{x_{0}}(x)$, thus $\phi$ is a fundamental solution.
\end{proof}

We now define transition functions.

\begin{definition}
A function $\varphi(x,t)$ is a \textbf{transition function} if it satisfies the compatibility condition
\begin{equation*}
    \int\varphi(x-y,t-s)\varphi(y-z,s-u)\;dy = \varphi(x-z,t-u).
\end{equation*}
\end{definition}

Once  again, this definition is rather informal. In the classical definition one considers a transition function $\{\varphi_{t}\}_{t > 0}$ as a family of functions in $L^{2}(\mathbb{R}^{n})$. However, there is no ambiguity in considering families of distributions.

The next theorem shows how operators and transition functions are related to each other.

\begin{theorem}[Kolmogorov Compatibility]
Let $A$ be an operator in $L^{2}(\mathbb{R}^{n})$ that generates a $C_{0}$-semigroup. If $A$ admits a fundamental solution $\varphi$ then it is a transition function.
\end{theorem}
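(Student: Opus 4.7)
The plan is to derive the compatibility identity directly from the semigroup law $e^{(t-u)A} = e^{(t-s)A}\,e^{(s-u)A}$ (valid for $u \le s \le t$), combined with the integral representation provided by the preceding Characterization of Fundamental Solutions. Since $A$ generates a $C_0$-semigroup and admits a fundamental solution $\varphi$, that proposition gives
\begin{equation*}
  e^{\tau A}f(x) = \int_{\mathbb{R}^n} \varphi(y-x,\tau)\,f(y)\,dy
\end{equation*}
for every $\tau>0$ and every $f$ in a suitable dense class (for instance, test functions).

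First I would compute $e^{(t-u)A}f$ in two ways. On one hand, by the single-exponential formula above, $e^{(t-u)A}f(x) = \int \varphi(z-x,t-u)f(z)\,dz$. On the other hand, applying the semigroup law and inserting the formula twice,
\begin{align*}
  e^{(t-s)A}\bigl(e^{(s-u)A}f\bigr)(x)
    &= \int \varphi(y-x,t-s)\,\bigl(e^{(s-u)A}f\bigr)(y)\,dy\\
    &= \int\!\!\int \varphi(y-x,t-s)\,\varphi(z-y,s-u)\,f(z)\,dz\,dy.
\end{align*}
After interchanging the order of integration and equating the two expressions for $e^{(t-u)A}f(x)$, one obtains, for every admissible $f$,
\begin{equation*}
  \int f(z)\left[\int \varphi(y-x,t-s)\,\varphi(z-y,s-u)\,dy - \varphi(z-x,t-u)\right] dz = 0.
\end{equation*}
Since $f$ ranges over a dense class, the bracketed quantity must vanish, yielding the Kolmogorov compatibility identity (after the obvious relabeling of variables).

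The main technical obstacle is that $\varphi(\cdot,\tau)$ is only guaranteed to be distribution-valued, so the ``integrals'' above must be read as pairings between distributions and test functions, and the swap of order must be justified at that level. I would handle this by phrasing the iterated integrals as iterated distributional pairings: $\varphi(y-x,t-s)$ acts on the test function $y\mapsto\bigl(e^{(s-u)A}f\bigr)(y)$, which belongs to the appropriate test space because $e^{(s-u)A}$ maps it into itself under the standing hypotheses on $A$. The semigroup identity then transfers, by density and continuity of the pairing, to the identity of distributions in $(x,z)$ required by the theorem statement. Finally, one may invoke the translation-invariance encoded in the notation $\varphi(x-y,\cdot)$ to rewrite the identity in the symmetric form stated.
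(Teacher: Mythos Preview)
Your argument is correct and follows essentially the same route as the paper: both derive the compatibility identity by writing $e^{(t+s)A}f = e^{tA}e^{sA}f$ using the integral representation from the Characterization of Fundamental Solutions, then compare the two resulting kernel expressions. The only cosmetic difference is that the paper extracts the kernel identity by evaluating at $f=\delta_{x_0}$, whereas you invoke density of test functions to conclude that the bracketed kernel vanishes; these are equivalent ways of reading off an identity of distributions.
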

\begin{proof}
If $\varphi$ is the fundamental solution of $A$ the previous result implies that
\begin{equation*}
    e^{tA}f(x) = \int_{\mathbb{R}^{n}} \varphi(y-x,t)f(y)\;dy.
\end{equation*}
The semigroup property implies that the two quantities
\begin{align*}
    e^{(t+s)A}f(x) = \int_{\mathbb{R}^{n}} \varphi(y-x,t+s)f(y)\;dy
\end{align*}
and
\begin{align*}
    e^{tA}e^{sA}f(x) &= e^{tA}\left(\int_{\mathbb{R}^{n}} \varphi(y-x,s)f(y)\;dy\right)\\
    &= \int_{\mathbb{R}^{n}}\int_{\mathbb{R}^{n}} \varphi(z-x,t)\varphi(y-z,s)f(y)\;dy\;dz
\end{align*}
are equal. Evaluating in $f(x) = \delta_{x_{0}}(x)$ we obtain that
\begin{equation*}
    \int_{\mathbb{R}^{n}} \varphi(x_{0}-z,s)\varphi(z-x,t)\;dz = \varphi(x_{0}-x,t+s),
\end{equation*}
which is the compatibility condition with different variable names.
\end{proof}

\subsection{Functional Measure Associated to an Operator}

We now proceed to construct functional measures starting from a transition function. We do not assume the transition function to be the fundamental solution of the generator of a $C_{0}$-semigroup when it is not necessary.

Let $\hat{\mathbb{R}}^{n} = \mathbb{R}^{n}\cup\{\infty\}$ be the one point compactification of $\mathbb{R}^{n}$. With this topology $\hat{\mathbb{R}}^{n}$ is homeomorphic to $S^{n}$. For $t > 0$ we define
\begin{equation*}
    \Omega_{t} = \prod_{i\in [0,t]} \hat{\mathbb{R}}^{n},
\end{equation*}
in other words, the set of functions $f\colon 
[0,t] \to \hat{\mathbb{R}}^{n}$. We equip this space with the product topology $\tau_{\prod}$, that is to say, the topology of pointwise convergence. It follows from Tychonoff's theorem that $(\Omega,\tau_{\prod})$ is a compact topological space which is also easily verified to be Hausdorff. For fixed $x,y\in\mathbb{R}^{n}$ we define
\begin{equation*}
    \Omega_{x,y;t} = \{\gamma\in\Omega_{t}\;|\;\gamma(0) = x,\;\gamma(t) = y\}.
\end{equation*}
By a simple net argument one can verify that $\Omega_{x,y;t}$ is closed, thus also compact and Hausdorff in the subspace topology, which we will also denote by $\tau_{\prod}$. Let us consider the Borel $\sigma$-algebra $\sigma\left(\tau_{\prod}\right)$ of this space. We will work on the measurable space $\left(\Omega_{x,y;t},\sigma\left(\tau_{\prod}\right)\right)$.

\begin{definition}
We say that $F\colon \Omega_{x,y;t} \to \mathbb{R}$ is a \textbf{cylinder function} if $F$ is continuous and there exists a set of points $T = \{0\leq t_{1} \leq \cdots \leq t_{j}\}\subset [0,t]$ and a function $\Tilde{F} \colon \prod_{i\leq j}\hat{\mathbb{R}}^{n} \to \mathbb{R}$ such that
\begin{equation*}
    F(\gamma) = \tilde{F}(\gamma(t_{1}),\ldots,\gamma(t_{j})).
\end{equation*}
for $\gamma \in \Omega_{x,y;t}$.
\end{definition}

We denote the set of cylinder functions by $C_{cyl}(\Omega_{x,y;t})$. Note that $C_{cyl}(\Omega_{x,y;t})$ is a linear subspace of $C(\Omega_{x,y;t})$.

\begin{theorem}[Existence of Functional Measures]
Let $\varphi$ be a transition function. There exists a unique regular Borel measure $\mu_{\varphi}^{x,y;t}$ on $\Omega_{x,y;t}$ such that
\begin{equation*}
    \int_{\Omega}F(\gamma)\;d\mu_{\varphi}^{x,y;t} = \int_{\mathbb{R}^{n}}\cdots \int_{\mathbb{R}^{n}}\tilde{F}(x_{1},\ldots,x_{j}) \prod_{i=1}^{j+1} \varphi(x_{i}-x_{i-1},t_{i}-t_{i-1}) \;dx_{1}\cdots dx_{j}
\end{equation*}
for every cylinder function $F$, where we set $t_{0} = 0$, $t_{j+1} = t$, $x_{0} = x
$ and $x_{j+1} = y$. The measure $\mu_{\varphi}^{x,y;t}$ is the \textbf{functional measure} associated to $\varphi$, with fixed endpoints $x$ and $y$ at time $t$.
\end{theorem}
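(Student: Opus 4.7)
The plan is to apply the Riesz-Markov-Kakutani representation theorem to the compact Hausdorff space $\Omega_{x,y;t}$. Since $\hat{\mathbb{R}}^{n}$ is homeomorphic to $S^{n}$, Tychonoff's theorem guarantees that $\Omega_{t}$ is compact Hausdorff, and $\Omega_{x,y;t}$ inherits this as a closed subspace, as already observed in the excerpt. If I can construct a bounded linear functional $\Lambda$ on $C(\Omega_{x,y;t})$ whose value on cylinder functions matches the prescribed finite-dimensional integral, Riesz-Markov will immediately yield a unique regular Borel measure $\mu_{\varphi}^{x,y;t}$ with the required integration property.

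I would begin by defining $\Lambda \colon C_{cyl}(\Omega_{x,y;t}) \to \mathbb{R}$ as the right-hand side of the formula in the statement. The crucial well-definedness step is that the value of $\Lambda(F)$ must not depend on the choice of time set $T$ representing $F$: any two representations of the same cylinder function admit a common refinement, so by induction it suffices to show that inserting a single time point $s \in (t_{k}, t_{k+1})$ leaves the iterated integral invariant. Integrating out the newly introduced variable $x_{s}$ reduces the problem to
$$\int_{\mathbb{R}^{n}} \varphi(x_{k+1} - x_{s},\, t_{k+1} - s)\, \varphi(x_{s} - x_{k},\, s - t_{k})\, dx_{s} = \varphi(x_{k+1} - x_{k},\, t_{k+1} - t_{k}),$$
which is precisely the Kolmogorov compatibility condition for $\varphi$. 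Linearity of $\Lambda$ follows by passing two cylinder functions to the union of their time sets, and the same compatibility identity, applied iteratively, gives the bound $|\Lambda(F)| \leq \varphi(y-x, t)\, \|F\|_{\infty}$ when $\varphi$ is nonnegative, hence continuity of $\Lambda$ with respect to the supremum norm.

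Next I would prove that $C_{cyl}(\Omega_{x,y;t})$ is dense in $C(\Omega_{x,y;t})$ by a direct application of the Stone-Weierstrass theorem. The set of cylinder functions contains the constants (taking the trivial time set) and is stable under sums, products, and scalar multiples, since merging the two time sets expresses both functions through a common partition; hence it forms a subalgebra. For the separating property, if $\gamma_{1} \neq \gamma_{2}$ in $\Omega_{x,y;t}$ then $\gamma_{1}(s) \neq \gamma_{2}(s)$ for some $s \in (0,t)$, and any continuous function on $\hat{\mathbb{R}}^{n}$ separating these two values lifts to a cylinder function separating $\gamma_{1}$ and $\gamma_{2}$. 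Density then allows me to extend $\Lambda$ by continuity to all of $C(\Omega_{x,y;t})$, at which point Riesz-Markov produces the measure and yields both existence and uniqueness: any two regular Borel measures satisfying the integral formula for cylinder functions induce the same functional on the dense subspace, hence the same functional on $C(\Omega_{x,y;t})$, hence coincide.

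The main obstacle I foresee is the continuity estimate for $\Lambda$ when $\varphi$ is permitted to be signed or distribution-valued, which is the generality envisaged in the earlier definitions; the compatibility identity controls $\varphi$ itself but says nothing about $|\varphi|$, so obtaining a uniform sup-norm bound on $\Lambda(F)$ may require either restricting to nonnegative $\varphi$ or imposing an independent integrability assumption. A secondary subtlety is verifying that the extension of $\Lambda$ from $C_{cyl}$ to $C(\Omega_{x,y;t})$ respects the cylinder-function formula on the nose rather than only up to approximation, but this is automatic once $\Lambda$ is known to be continuous, since the formula on $C_{cyl}$ is preserved under passing to limits in sup norm.
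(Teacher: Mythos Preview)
Your proposal is correct and follows essentially the same route as the paper: define $\Lambda$ on $C_{cyl}$, use the compatibility identity for well-definedness, bound $\Lambda$ in sup norm, extend to $C(\Omega_{x,y;t})$, and invoke Riesz--Markov--Kakutani. The only substantive difference is in the extension step: the paper uses Hahn--Banach to extend $\Lambda$ from $C_{cyl}$ to $C(\Omega_{x,y;t})$ with the same norm, whereas you use Stone--Weierstrass to show $C_{cyl}$ is dense and extend by continuity. Your choice actually buys something the paper's argument glosses over: Hahn--Banach extensions are not unique in general, so the paper's proof does not by itself secure the uniqueness clause of the theorem, while your density argument does, since any regular Borel measure matching the formula on the dense set $C_{cyl}$ is determined on all of $C(\Omega_{x,y;t})$. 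The obstacle you flag concerning signed or distribution-valued $\varphi$ is real and is equally present in the paper, which simply asserts the bound $|\Lambda(F)|\leq |\varphi(x-y,t)|\,\|F\|_\infty$ without further justification in that generality.
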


\begin{proof}
We define $\Lambda \colon C_{cyl}(\Omega_{x,y;t}) \to \mathbb{R}$ as
\begin{equation*}
    \Lambda(F) = \int_{\mathbb{R}^{n}}\cdots \int_{\mathbb{R}^{n}}\tilde{F}(x_{1},\ldots,x_{j}) \prod_{i=1}^{j+1} \varphi(x_{i}-x_{i-1},t_{i}-t_{i-1})\;dx_{1}\cdots dx_{j}.
\end{equation*}
It follows from Kolmogorov's compatibility equation that $\Lambda$ is well defined. This is because if $\tilde{F}$ defines a function of $j$ variables then in also defines a function on $j+k$ variables that does not depend on the last $k$ variables. In this case, Kolmogorov's compatibility equation assures that the additional $\varphi$'s do not change the value of $\Lambda$.

It is verified that
\begin{equation*}
    \Lambda(1) = \varphi(x-y,t)
\end{equation*}
and
\begin{equation*}
    |\Lambda(F)| \leq |\varphi(x - y,t)|\; |F|.
\end{equation*}
Then, it follows that $|\Lambda| = |\varphi(x-y,t)|$. On the other hand, from  the Hahn-Banach theorem it follows that $\Lambda$ has a continuous extension of the same norm to all of $C(\Omega_{x,y;t})$, which allows us to consider $\Lambda$ as an element of $C(\Omega_{x,y;t})^{\ast}$. It follows from the Riesz-Markov-Kakutani theorem that there exists a unique signed Borel measure $\mu_{\varphi}^{x,y;t}$ such that for $F\in C_{cyl}(\Omega_{x,y;t})$ we have
\begin{align*}
    \int_{\Omega_{x,y;t}} F\;d\mu_{\varphi} &= \Lambda(F)\\
    &= \int_{\mathbb{R}^{n}}\cdots \int_{\mathbb{R}^{n}}\tilde{F}(x_{1},\ldots,x_{j}) \prod_{i=1}^{j+1} \varphi(x_{i}-x_{i-1},t_{i}-t_{i-1})\;dx_{1}\cdots dx_{j},
\end{align*}
which was to be proved.
\end{proof}

Note that the same result can be achieved by means of Kolomogorov's extension theorem. Indeed, for a finite product subspace the functional $\Lambda$ induces a measure in such space. Since $\varphi$ is a transition function, all these measures are compatible and Kolmogorov's extension theorem yields a measure with the desired properties. It is only left to show that integration with respect to this measure coincides with $\Lambda$, which is a simple technicality.

The previous result also applies when the transition function $\varphi$ takes distribution values. Since $\mu_{\varphi}^{x,y;t}(\Omega_{x,y;t}) = \varphi(x-y;t)$, in case that $\varphi(x-y;t)$ is a distribution, this is actually the distribution evaluated in the function of constant value $1$.

We now show under which conditions this measure can be restricted to the subset of continuous paths. The objective of this is to make integrals of functions of paths integrable functions. The proof we present is adapted from appendix B of \cite{nelson1964feynman}.

\begin{theorem}[Full Measure of Continuous Paths]
Let $\varphi$ be a positive transition function. If
\begin{equation}\label{EqFullMeasureCondition}
    \lim_{\delta\to 0}\int_{|x_{1}-x_{2}|\geq\epsilon}\varphi(x_{2} - x_{1},\delta)\varphi(x_{1} - x,s)\;dx_{1} = 0
\end{equation}
for each $s\in [0,t]$, $x_{2}\in\hat{\mathbb{R}}^{n}$ and $\epsilon > 0$ then the set of elements in $\Omega_{x,y;t}$ that are continuous at every point has full measure respect to $\mu_{\varphi}^{x,y;t}$.
\end{theorem}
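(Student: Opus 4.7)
The plan is to adapt the Kolmogorov continuity argument, as used in Nelson's appendix, to this distributional setting. The key technical object is the measure of a single-jump cylinder event, which the decay hypothesis (\ref{EqFullMeasureCondition}) is designed to control; once this is in hand, a dyadic partition combined with a Borel--Cantelli argument should give uniform continuity on a dense subset, and a further application of the same estimate at non-dyadic times should promote this to continuity at every point.

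First I would compute, for $0 \le s_1 < s_2 \le t$ and $\epsilon > 0$, the measure of the cylinder set $\{\gamma : |\gamma(s_1) - \gamma(s_2)| \ge \epsilon\}$. By the theorem on existence of functional measures this equals
\[
\iint_{|x_1 - x_2| \ge \epsilon} \varphi(x_1-x,s_1)\,\varphi(x_2-x_1,s_2-s_1)\,\varphi(y-x_2,t-s_2)\,dx_1\,dx_2.
\]
The inner integral in $x_1$ is exactly the quantity bounded by hypothesis (\ref{EqFullMeasureCondition}) with $s = s_1$ and $\delta = s_2 - s_1$, while Kolmogorov's compatibility (from the previous theorem) dominates the full expression by $\varphi(y-x,t)$, so dominated convergence makes this measure vanish as $s_2 - s_1 \to 0$.

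Next I would apply this single-jump estimate to the dyadic partition $s_i^n = it/2^n$. The cylinder set
\[
B_n(\epsilon) = \bigcup_{i=0}^{2^n-1}\{\gamma : |\gamma(s_i^n) - \gamma(s_{i+1}^n)| \ge \epsilon\}
\]
has measure bounded by the sum of the corresponding single-jump measures; under the positivity and decay of $\varphi$ this sum tends to zero, so one may extract a subsequence $n_k$ along which $\mu_\varphi^{x,y;t}(B_{n_k}(1/k)) \le 2^{-k}$. Borel--Cantelli then gives a full-measure set on which the dyadic oscillations vanish in the limit, so the restriction of $\mu$-a.e.\ $\gamma$ to the dyadic rationals is uniformly continuous and extends to a continuous function $\tilde\gamma$ on $[0,t]$.

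The main obstacle is promoting uniform continuity on a countable dense set to continuity at \emph{every} point, which is an uncountable condition that the cylindrical $\sigma$-algebra does not directly see. My plan is to invoke the estimate of the first step once more at non-dyadic times: for each fixed $s\in[0,t]$, the measure of $\{|\gamma(s) - \gamma(r)| \ge \eta\}$ tends to zero as rational $r \to s$, so $\gamma(s) = \tilde\gamma(s)$ holds for $\mu$-a.e.\ $\gamma$. Upgrading this ``for each $s$'' statement to a single full-measure set on which $\gamma \equiv \tilde\gamma$ requires using the regularity of $\mu_\varphi^{x,y;t}$ (inherited from the Riesz--Markov construction) together with the compactness of $\hat{\mathbb{R}}^n$, and is the most delicate point in the argument.
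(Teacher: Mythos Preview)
Your union-bound step is a genuine gap. You write $B_n(\epsilon)$ as a union of $2^n$ single-jump events and bound its measure by the sum of the single-jump measures. Each summand does go to zero as the mesh $t/2^n \to 0$, but hypothesis (\ref{EqFullMeasureCondition}) is purely qualitative: it gives no rate of decay in $\delta$. There is therefore no reason for a sum of $2^n$ such terms to tend to zero, and so no way to extract the subsequence you need for Borel--Cantelli. The classical Kolmogorov continuity argument works precisely because its hypothesis is quantitative, of the form $C\delta^{1+\beta}$; that structure is absent here.

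The paper's proof avoids the union bound entirely. For a finite partition $F=\{t_i\}$ it computes the measure of the \emph{full} event $\omega(\epsilon;F)=\{\exists\,i:\ |\gamma(t_i)-\gamma(t_{i-1})|\ge\epsilon\}$ directly, by writing its complement as a product of indicator functions $\prod_i \chi_B$ and expanding
\[
1-\prod_{i}\chi_B(i,i-1)=\sum_{\sigma\in\{0,1\}^{N+1}\setminus\{0\}}\prod_i \chi_{\sigma(i)}(i,i-1).
\]
After a relabelling of integration variables in each summand, every term carries a factor $\chi_{B^c}(1,0)$ times something bounded by $1$; integrating against the chain of $\varphi$'s and using the transition property collapses the rest, yielding a bound by a \emph{single} integral of the type in (\ref{EqFullMeasureCondition}), uniform in $N$ and in the partition $F$. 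Inner regularity of $\mu_\varphi^{x,y;t}$ then passes this bound to $\omega(\epsilon,\delta)$. This combinatorial collapse is the key idea you are missing; it also sidesteps your second difficulty, since the argument works directly with the discontinuity set $D=\bigcup_\epsilon\bigcap_\delta\omega(\epsilon,\delta)$ in the compact product space and never needs to pass through a countable dense set of times.
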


Note that fundamental solutions of generators of $C_{0}$-semigroups satisfy condition (\ref{EqFullMeasureCondition}) since $\lim_{\delta \to 0}\varphi(x-y,\delta) = \delta_{y}(x)$. For the proof we will use $D$ to denote the set of elements of $\Omega_{x,y;t}$ that are discontinuous at at least one point. The proof is rather technical and consists, essentially, on rewriting $D$ in terms of cylinder sets to compute its measure.

\begin{proof}
We only need to show that $\mu_{\varphi}^{x,y;t}(D) = 0$. For each $\epsilon,\delta > 0$ consider the set $\omega(\epsilon,\delta)$ consisting on the elements $\gamma\in\Omega_{x,y;t}$ for which there exist $r,s\in [0,t]$ such that $|r-s| < \delta$ and $|\gamma(r) - \gamma(s)| \geq \epsilon$. It is then clear that
\begin{equation*}
    D = \bigcup_{\epsilon > 0} \bigcap_{\delta > 0} \omega(\epsilon,\delta).
\end{equation*}
It is easy to verify that the sets $\omega(\epsilon,\delta)$ are decreasing in $\epsilon$ and increasing in $\delta$. Both the previous union and intersection can be made countable by having $\epsilon$ and $\delta$ be reciprocals of naturals. In this case, if $\delta = \frac{1}{k}$ then $\omega(\epsilon,\delta)$ decreases as $k$ increases, and similarly for $\epsilon$. This implies that  it is possible to use the continuity of the measure to conclude that
\begin{equation*}
    \mu_{\varphi}^{x,y;t}(D) = \lim_{\epsilon \to 0}\lim_{\delta\to 0} \mu_{\varphi}^{x,y;t}(\omega(\epsilon,\delta)).
\end{equation*}
It follows that it is enough to show that $\lim_{\delta\to 0} \mu_{\varphi}^{x,y;t}(\omega(\epsilon,\delta)) = 0$ for any reciprocal of a natural $\epsilon$. For any finite and increasingly ordered set $F = \{t_{i}\}_{i=1}^{N} \subset [0,t]$ such that $|t_{i} - t_{i-1}| < \delta$ we define
\begin{equation*}
    \omega(\epsilon;F) = \{\gamma\in\Omega_{x,y;t}\;|\;\exists\; i\leq N\;\textrm{ such that } |\gamma(t_{i}) - \gamma(t_{i-1})| \geq \epsilon\}.
\end{equation*}
Let $\mathcal{C}_{\delta}$ denote the class of all sets of this form. We have that
\begin{equation*}
        \omega(\epsilon,\delta) = \bigcup_{\omega\in\mathcal{C}_{\delta}}\omega.
\end{equation*}
Furthermore, the class $\mathcal{C}_{\delta}$ is closed under finite unions, thus, inner regularity implies that
\begin{equation*}
    \mu_{\varphi}^{x,y;t}(\omega(\epsilon,\delta)) = \sup_{\omega\in\mathcal{C}_{\delta}}\mu_{\varphi}^{x,y;t}(\omega).
\end{equation*}
We will be able to finally conclude if we can show that there is a bound for $\mu_{\varphi}^{x,y;t}(\epsilon;F)$ that does not depend on $F$ and vanishes as $\delta \to 0$. To simplify notation, let
\begin{equation*}
    B = \{(x_{1},x_{2})\in (\hat{\mathbb{R}}^{n})^{2}\;|\;|x_{1} - x_{2}| < \epsilon\},
\end{equation*}
$\varphi_{i,i-1} = \varphi(x_{i} - x_{i-1};t_{i} - t_{i-1})$ and we will also write $i$ instead of $x_{i}$ when we evaluate inside a function. Notice that the complement of $\omega(\epsilon;F)$, is
\begin{equation*}
    \omega^{C}(\epsilon;F) = \{\gamma\in\Omega_{x,y;t}\;|\;|\gamma(t_{i}) - \gamma(t_{i-1})| < \epsilon,\;\forall i\leq N\},
\end{equation*}
which is a cylinder set since
\begin{equation*}
\chi_{\omega^{C}(\epsilon;F)}(\gamma) = \prod_{i=1}^{N+1}\chi_{B}(\gamma(t_{i}),\gamma(t_{i-1})).
\end{equation*}
It follows that
\begin{equation*}
\mu_{\varphi}^{x,y;t}(\omega^{C}(\epsilon;F)) = \int_{(\hat{\mathbb{R}}^{n})^{N}} \prod_{i=1}^{N+1}\chi_{B}(i,i-1)\varphi_{i,i-1}\;dx_{i}.
\end{equation*}
From this we conclude that
\begin{align*}
    \mu_{\varphi}^{x,y;t}(\omega(\epsilon;F)) &= \varphi(x-y;t) - \int_{(\hat{\mathbb{R}}^{n})^{N}} \prod_{i=1}^{N+1}\chi_{B}(i,i-1)\varphi_{i,i-1}\;dx_{i}\\
    &= \int_{(\hat{\mathbb{R}}^{n})^{N}} \prod_{i=1}^{N+1}\varphi_{i,i-1}\;dx_{i} - \int_{(\hat{\mathbb{R}}^{n})^{N}} \prod_{i=1}^{N+1}\chi_{B}(i,i-1)\varphi_{i,i-1}\;dx_{i}\\
    &= \int_{(\hat{\mathbb{R}}^{n})^{N}} \left(1 - \prod_{i=1}^{N+1}\chi_{B}(i,i-1)\right)\prod_{i=1}^{N+1}\varphi_{i,i-1}\;dx_{i}.
\end{align*}
Let us now note that if we write $\chi_{0} = \chi_{B}$ and $\chi_{1} = \chi_{B^{c}}$ then
\begin{align*}
    1 - \prod_{i=1}^{N+1}\chi_{B}(i,i-1) &= \prod_{i=1}^{N+1}(\chi_{B}(i,i-1) + \chi_{B^{c}}(i,i-1)) - \prod_{i=1}^{N+1}\chi_{B}(i,i-1)\\
    &= \sum_{\sigma\in\{0,1\}^{N+1}}\prod_{i=1}^{N+1}\chi_{\sigma(i)}(i,i-1) - \prod_{i=1}^{N+1}\chi_{B}(i,i-1)\\
    &= \sum_{\sigma\in\{0,1\}^{N+1}\setminus\{0\}}\prod_{i=1}^{N+1}\chi_{\sigma(i)}(i,i-1),
\end{align*}
where 
$\{0,1\}^{N+1}\setminus\{0\}$ is the set of $N+1$-tuples taking values in $\{0,1\}$ with at least one non-zero entry. This function will be integrated, thus, inside the integral only, we can rename variables in such a way that each summand contains a factor of the form $\chi_{B^{c}}(1,0)$. This is possible since each summand contains a factor of $\chi_{B^{C}}$ in some variables. It follows that, when integrated, this function is equal to
\begin{equation*}
     \chi_{B^{c}}(1,0) J(0,\ldots,N+1),
\end{equation*}
where $J$ is some function that is lesser or equal than $1$. We can finally conclude that
\begin{align*}
    \mu_{\varphi}^{x,y;t}(\omega(\epsilon;F)) &= \int_{(\hat{\mathbb{R}}^{n})^{N}} \left(1 - \prod_{i=1}^{N+1}\chi_{B}(i,i-1)\right)\prod_{i=1}^{N+1}\varphi_{i,i-1}\;dx_{i}\\
    &= \int_{(\hat{\mathbb{R}}^{n})^{N}} \chi_{B^{C}}(1,0)J(0,\ldots,N+1)\prod_{i=1}^{N+1}\varphi_{i,i-1}\;dx_{i}\\
    &\leq \int_{(\hat{\mathbb{R}}^{n})^{N}} \chi_{B^{C}}(1,0)\prod_{i=1}^{N+1}\varphi_{i,i-1}\;dx_{i}\\
    &= \int_{|x_{1}-x_{2}|\geq\epsilon}\varphi(y - x_{1},t - t_{1})\varphi(x_{1} - x,s)\;dx_{1}.
\end{align*}
It follows that this integral is an upper bound of $\{\mu_{\varphi}^{x,y;t}(\omega)\;|\;\omega\in\mathcal{C}_{\delta}\}$, thus
\begin{equation*}
    \mu_{\varphi}^{x,y;t}(\omega(\epsilon,\delta)) \leq \int_{|x_{1}-x_{2}|\geq\epsilon}\varphi(y - x_{1},t - t_{1})\varphi(x_{1} - x,s)\;dx_{1}.
\end{equation*}
Our hypothesis implies that this integral bound vanishes as $\delta \to 0$ and we conclude the result.

\begin{figure}[t]
\centering
\includegraphics{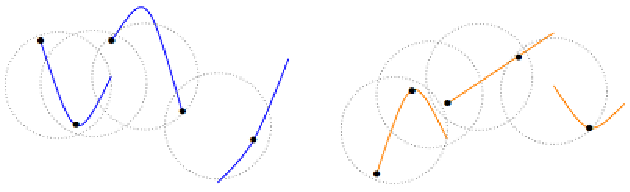}
\caption{Examples of paths in $\omega^{C}(\epsilon;F)$. This is finally a cylinder set as it checks the distance between points at fixed times, eventhough the dotted circles may change for each path.}
\end{figure}

\end{proof}

If $C_{x,y;t}$ denotes the set of continuous paths in $\Omega_{x,y;t}$, the previous theorem implies that for a Borel measurable $E$ the function
\begin{equation*}
    \nu(E) = \mu_{\varphi}^{x,y;t}(C_{x,y;t}\cap E)
\end{equation*}
defines a non-trivial measure in $C_{x,y;t}$. We will denote this measure $\mu_{\varphi}^{x,y;t}$ as before. If $|\cdot|_{\infty}$ denotes the supremum norm of $C_{x,y;t}$ and $\tau_{\infty}$ its topology, then $\tau_{\prod} \subset \tau_{\infty}$. This implies that $\sigma(\tau_{\prod}) \subset \sigma(\tau_{\infty})$. In fact, the equality holds. To see this let $\{r_{n}\}_{n\in\mathbb{N}}$ be a numeration of the rationals in $[0,t]$. Let $V(I_{1},\ldots,I_{j};t_{1},\ldots,t_{j}) = \{f\colon [0,t] \to \hat{\mathbb{R}}^{n}\;|\;f(t_{j})\in I_{j}\}$ be a standard basis of $\tau_{\prod}$. Continuity implies that
\begin{align*}
    B_{1}^{|\cdot|_{\infty}}(0) = \bigcap_{n\in\mathbb{N}}V((-1,1);r_{n}).
\end{align*}
Since $C([0,t])$ is separable every open set is the countable union of open balls. Restricting to $C_{x,y;t}$ we obtain $\sigma(\tau_{\infty}) \subset \sigma(\tau_{\prod})$.

A few comments are in place. First, the space $(C_{x,y;t},\tau_{\infty})$ is no longer compact, however, it is a Banach space. Thus $\mu_{\varphi}^{x,y;t}$ is a measure in a Banach space. Second, the measure $\mu_{\varphi}^{x,y;t}$ restricts to $C_{x,y;t}$ in a non-trivial manner only if $\varphi$ is non negative or, more generally, if one can show that $\mu_{\varphi}^{x,y;t}(C_{x,y;t}) > 0$. This does not contradict other results on measures on Banach spaces, particularly, the complex version of Kolmogorov's extension theorem and the non-existence of Feynman's measure.

Since $C_{x,y;t}$ consists only of continuous paths, maps of the form
\begin{equation*}
\begin{array}{cccc}
     \mathcal{F} \colon & \Omega_{x,y;t} & \longrightarrow & \mathbb{R}\\
     & \gamma & \longmapsto & f\left(\int_{0}^{t}V(\gamma(s))\;ds\right)
\end{array}
\end{equation*}
are defined in $C_{x,y;t}$. Our next result shows they are in fact measurable and integrable under mild conditions.

\begin{theorem}[Functional Integral Formula]
Let $\varphi$ be a transition function such that continuous functions have non-zero measure respect to its functional measure $\mu_{\varphi}^{x,y;t}$. If $V\in C(\mathbb{R}^{n})$ and $f\in C(\mathbb{R})$ are such that the function
\begin{equation*}
\begin{array}{cccc}
     \mathcal{F} \colon & \Omega_{x,y;t} & \longrightarrow & \mathbb{R}\\
     & \gamma & \longmapsto & f\left(\int_{0}^{t}V(\gamma(s))\;ds\right)
\end{array}
\end{equation*}
is bounded then $\mathcal{F}$ is integrable respect to $\mu_{\varphi}^{x,y;t}$ and
\begin{align*}
    \int_{\Omega_{x,y;t}} f\left(\int_{0}^{t}V(\gamma(s))\;ds\right)\;d\mu_{\varphi}^{x,y;t}(\gamma) = &\lim_{N\to\infty} \int \cdots\int f\left(\sum_{k=1}^{N}V(x_{k})\Delta t(N)\right)\\
    &\prod_{i=1}^{N+1} \varphi(x_{i}-x_{i-1},t_{i}-t_{i-1})\;dx_{1}\cdots dx_{N},
\end{align*}
where we set $t_{0} = 0$, $t_{N+1} = t$, $x_{0} = x$ and $x_{N+1} = y$
\end{theorem}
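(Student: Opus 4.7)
The plan is to approximate $\mathcal{F}$ by a sequence of cylinder functions whose integrals are evaluated exactly by the Existence of Functional Measures theorem. For each $N$ partition $[0,t]$ into $N+1$ equal intervals with interior nodes $t_1<\cdots<t_N$ and step $\Delta t(N)=t/(N+1)$, and set
\begin{equation*}
\mathcal{F}_N(\gamma)=f\!\left(\sum_{k=1}^{N}V(\gamma(t_k))\,\Delta t(N)\right).
\end{equation*}
Each $\mathcal{F}_N$ is continuous on $\Omega_{x,y;t}$ since it depends on $\gamma$ only through the evaluation maps at $t_1,\dots,t_N$ followed by continuous composition with $V$ and $f$, so $\mathcal{F}_N\in C_{cyl}(\Omega_{x,y;t})$ and the preceding theorem identifies $\int \mathcal{F}_N\,d\mu_{\varphi}^{x,y;t}$ with precisely the iterated integral appearing on the right-hand side of the claim.

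Next I would establish pointwise convergence $\mathcal{F}_N(\gamma)\to\mathcal{F}(\gamma)$ on $C_{x,y;t}$, the support of interest, where the measure concentrates under the standing hypothesis. For a fixed continuous $\gamma$, the composition $V\circ\gamma$ is continuous on the compact interval $[0,t]$, hence Riemann integrable, so the equispaced sums converge to $\int_0^{t}V(\gamma(s))\,ds$; continuity of $f$ then yields $\mathcal{F}_N(\gamma)\to\mathcal{F}(\gamma)$. The same argument provides measurability of $\mathcal{F}$: if $\gamma_n\to\gamma$ uniformly, then $V\circ\gamma_n\to V\circ\gamma$ uniformly on $[0,t]$ by uniform continuity of $V$ on the compact range, which forces convergence of the time integrals and, via $f$, of $\mathcal{F}(\gamma_n)$. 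Thus $\mathcal{F}$ is sup-norm continuous on $C_{x,y;t}$ and Borel measurable.

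The remaining step is to pass convergence under the integral sign via dominated convergence, which is legitimate because $\mu_{\varphi}^{x,y;t}$ is a finite measure. The key observation is that whenever either $V$ or $f$ is bounded, both the Riemann sums $\sum_{k=1}^{N}V(\gamma(t_k))\Delta t(N)$ and their limit lie in a common compact subset of $\mathbb{R}$ on which $f$ is bounded, giving a uniform constant dominant for $\{\mathcal{F}_N\}$. Dominated convergence then identifies the limit of the iterated integrals with $\int \mathcal{F}\,d\mu_{\varphi}^{x,y;t}$, which is the formula to be proved.

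The main obstacle is precisely the domination: boundedness of the limiting $\mathcal{F}$ does not automatically force a uniform bound on the cylinder approximants, since for unbounded $V$ a Riemann sum might stray outside the effective range of the limiting integrals taken over continuous paths. This is harmless in the applications of interest (where $V$ or $f$ is bounded), and in the general case one can still close the argument by exhausting $C_{x,y;t}$ along the sub-level sets $\{\gamma:\|V\circ\gamma\|_\infty\leq R\}$, applying dominated convergence on each level, and sending $R\to\infty$ using the uniform $L^1$-estimate coming from the hypothesis that $\mathcal{F}$ is globally bounded.
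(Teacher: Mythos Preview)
Your approach is essentially the same as the paper's: approximate $\mathcal{F}$ by the cylinder Riemann-sum functions $\mathcal{F}_N$, obtain pointwise convergence on continuous paths, invoke dominated convergence on the finite measure space, and evaluate $\int\mathcal{F}_N\,d\mu_{\varphi}^{x,y;t}$ via the Existence of Functional Measures theorem. The only structural difference is in the measurability step: the paper argues that each evaluation $\gamma\mapsto V(\gamma(t_k))$ is continuous, hence $F_N$ is Borel, and then takes $\mathcal{F}$ as a pointwise limit of Borel functions; you instead check sup-norm continuity of $\mathcal{F}$ on $C_{x,y;t}$ directly. Both routes are fine.

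Where you are actually more careful than the paper is the domination step. The paper simply writes ``Since $\mathcal{F}$ is bounded by hypothesis and $\mu_{\varphi}^{x,y;t}$ is a finite measure the dominated convergence theorem implies\ldots'' and proceeds, without exhibiting a dominant for the sequence $f(F_N)$; it is tacitly using boundedness of the limit as if it bounded the approximants. You correctly flag that this is not automatic when both $V$ and $f$ are unbounded, and you note that in the cases listed after the theorem (e.g.\ $f$ bounded, or $f$ monotone with $V$ one-sidedly bounded) a uniform constant dominant is immediate. Your exhaustion sketch for the fully general case is not quite complete---the claimed ``uniform $L^1$-estimate'' on the complements $\{\|V\circ\gamma\|_\infty>R\}$ does not follow from boundedness of $\mathcal{F}$ alone---but the paper does not close this gap either, so your write-up is at least as rigorous as the original and more transparent about where the hypothesis is actually used.
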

\begin{proof}
We define $T_{N} = \{t_{k} = \frac{tk}{N}\;|\;k\leq N\}$ and $\Delta t(N) = \frac{t}{N}$. Since $V(\gamma(s))$ is integrable for every $\gamma\in \Omega_{x,y;t}$ we have that
\begin{equation*}
    \int_{0}^{t}V(\gamma(s))\;ds = \lim_{N\to\infty} \sum_{k=1}^{N}V(\gamma(t_{k}))\Delta t(N).
\end{equation*}
Each of the functions $V_{k}\colon \Omega \to \mathbb{R}$ given by
\begin{equation*}
    V_{k}(\gamma) = V(\gamma(t_{k}))
\end{equation*}
is continuous which implies they are Borel measurable. This implies that the functions
\begin{equation*}
    F_{N}(\gamma) = \sum_{k=1}^{N}V(\gamma(t_{k}))\Delta t(N)
\end{equation*}
are Borel measurable and, since the pointwise limit of Borel measurable functions is Borel measurable, the function
\begin{equation*}
    \gamma \longmapsto \int_{0}^{t}V(\gamma(s))\;ds
\end{equation*}
is Borel measurable and $\mathcal{F}$ is also Borel measurable. Since $\mathcal{F}$ is bounded by hypothesis and $\mu_{\varphi}^{x,y;t}$ is a finite measure the dominated convergence theorem implies that $\mathcal{F}$ is integrable and
\begin{align*}
    \int_{\Omega_{x,y;t}} f\left(\int_{0}^{t}V(\gamma(s))\;ds\right)\;d\mu_{\varphi}^{x,y;t}(\gamma) &= \int_{\Omega_{x,y;t}} \lim_{N\to\infty}f\left(\sum_{k=1}^{N}V(\gamma(t_{k}))\Delta t(N)\right)\;d\mu_{\varphi}^{x,y;t}(\gamma)\\
    &= \lim_{N\to\infty}\int_{\Omega_{x,y;t}} f\left(\sum_{k=1}^{N}V(\gamma(t_{k}))\Delta t(N)\right)\;d\mu_{\varphi}^{x,y;t}(\gamma)\\
    &= \lim_{N\to\infty}\int_{\Omega_{x,y;t}} f\left(F_{N}(\gamma)\right)\;d\mu_{\varphi}^{x,y;t}(\gamma)
\end{align*}
Since the functions $f\left(F_{N}(\gamma)\right)$ are cylinder functions we have that
\begin{align*}
    \int_{\Omega_{x,y;t}} f\left(\int_{0}^{t}V(\gamma(s))\;ds\right)\;d\mu_{\varphi}^{x,y;t}(\gamma) = &\lim_{N\to\infty}\int_{\Omega_{x,y;t}} f\left(F_{N}(\gamma)\right)\;d\mu_{\varphi}^{x,y;t}(\gamma)\\
    = &\lim_{N\to\infty} \int \cdots\int f\left(\sum_{k=1}^{N}V(x_{k})\Delta t(N)\right)\\
    &\prod_{i=1}^{N+1} \varphi(x_{i}-x_{i-1},t_{i}-t_{i-1})\;dx_{1}\cdots dx_{N}
\end{align*}
which concludes the proof of the theorem.
\end{proof}

To apply the preceding theorem one must verify that the function
\begin{equation*}
    \gamma \longmapsto f\left(\int_{0}^{t}V(\gamma(s))\;ds\right)
\end{equation*}
is bounded. This is the case in the following situations:
\begin{enumerate}
    \item $f$ is bounded,
    \item $f$ is increasing and $V$ is bounded from above,
    \item $f$ is decreasing and $V$ is bounded from below.
\end{enumerate}

\section{Functional Measures and Differential Equations}
\label{sec:diff}

\subsection{Fundamental Solution as a Functional Integral}

We now confront the problem of whether the measure $\mu_{\varphi}^{x,y;t}$ can be used to solve differential equations related to an operator $A$ when $\varphi$ is the fundamental solution of $A$. The answer is affirmative under mild conditions, however, the proof  requires some rather tiresome calculations. To motivate the results that follow we first prove a particular case which covers a large amount of operators. In order for the functional measure $\mu_{\varphi}^{x,y;t}$ to have a non-trivial restriction to $C_{x,y;t}$ we assume it to be positive. We will remove this restriction in the following subsection.

\begin{theorem}[Funtional Integral Solution to a PDE. Particular case.]
Let $A$ be an operator in $L^{2}$ that generates a $C_{0}$-semigroup for which there exists a constant $\gamma$ such that
\begin{equation*}
    |e^{tA}|_{B(L^{2})}\leq e^{\gamma t}.
\end{equation*}
If aditionally $A$ admits a fundamental solution $\varphi$ and $V$ is bounded from below then the function
\begin{equation*}
    f(x,t) = \int_{C_{x,y;t}}\mbox{\Large\(e^{-\int_{0}^{t}V(\gamma(s))\;ds} \)} f_{0}(\gamma(s))\;d\mu_{\varphi}^{x,y,t}(\gamma)
\end{equation*}
is a solution to the abstract Cauchy problem
\begin{equation*}
    \partial_{t}f = A(f) + M_{V} (f)
\end{equation*}
with initial condition $f(x,0) = f_{0}(x)$. In particular, the operator $A + M_{V}$ admits a fundamental solution given by
\begin{equation*}
    \varphi_{V}(x-y,t) = \int_{C_{x,y;t}}\mbox{\Large\(e^{-\int_{0}^{t}V(\gamma(s))\;ds} \)}\;d\mu_{\varphi}^{x,y,t}(\gamma).
\end{equation*}
\end{theorem}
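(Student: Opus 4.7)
The plan is to identify the right-hand functional integral with the operator semigroup $e^{t(A+M_V)}f_0$ via a Trotter-type product formula and then invoke the general theory of $C_0$-semigroups to conclude that $f(x,t)$ solves the abstract Cauchy problem. The growth bound $|e^{tA}|_{B(L^2)} \leq e^{\gamma t}$, together with $V$ being bounded from below (so that $M_V$ is, up to a constant shift, a relatively well-behaved perturbation), is precisely the hypothesis needed to apply the Trotter-Kato product formula
$$e^{t(A+M_V)} f_0(x) = \lim_{n\to\infty} \bigl[\, e^{(t/n) A}\, e^{(t/n) M_V}\,\bigr]^n f_0(x).$$

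Using the representation $e^{sA}g(x) = \int \varphi(y-x,s)\, g(y)\, dy$ from the Characterization of Fundamental Solutions proposition and iterating the product with $\Delta t = t/n$ and $x_0 = x$ gives
$$\bigl[\, e^{(t/n) A}\, e^{(t/n) M_V}\,\bigr]^n f_0(x) = \int\!\cdots\!\int f_0(x_n)\, e^{-\sum_{i=1}^n V(x_i)\Delta t} \prod_{i=1}^n \varphi(x_i - x_{i-1},\Delta t)\, dx_1\cdots dx_n,$$
with the sign convention matched to the statement. This is precisely the discretization appearing in the Functional Integral Formula theorem applied to the continuous function $u \mapsto e^{-u}$, which is bounded above on $[ct,\infty)$ since $V \geq c$; the last coordinate $x_n = y$ plays the role of the free endpoint. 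Passing to the $n\to\infty$ limit by that theorem identifies the sequence with the functional integral $\int dy \int_{C_{x,y;t}} e^{-\int_0^t V(\gamma(s))\,ds}\, f_0(y)\, d\mu_\varphi^{x,y;t}(\gamma)$, yielding the first assertion. The formula for $\varphi_V$ then follows by specializing $f_0 = \delta_z$ and applying the converse direction of the Characterization of Fundamental Solutions proposition to the operator $A + M_V$.

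The principal obstacle is rigorously invoking Trotter-Kato when $M_V$ is unbounded: one must first verify that $A + M_V$ generates a $C_0$-semigroup, which is not automatic in this generality. A natural workaround is to truncate $V_k = V \wedge k$, apply the formula for each bounded multiplication operator $M_{V_k}$ (for which the product formula is standard), and then pass to the limit $k \to \infty$ using dominated convergence; the lower bound on $V$ keeps $e^{-\int V_k}$ uniformly bounded in $k$, which justifies the exchange of limits. A secondary bookkeeping issue is the reconciliation of the pinned-endpoint measure $\mu_\varphi^{x,y;t}$ with the product integral in which the last coordinate is free; this is handled by integrating the pinned identity against $dy$ and exploiting the Kolmogorov compatibility satisfied by $\varphi$.
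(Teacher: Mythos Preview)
Your strategy coincides with the paper's: invoke the Lie--Trotter (Trotter--Kato) product formula, expand the iterated product using the kernel representation $e^{sA}g(x)=\int\varphi(y-x,s)g(y)\,dy$ together with the explicit multiplication semigroup, and identify the resulting limit of finite-dimensional integrals with the functional integral via the Functional Integral Formula theorem. The one substantive difference is how the unboundedness of $M_V$ is handled. The paper does not truncate: since $V$ is bounded below, say $V\ge -\alpha$, multiplication by $e^{-tV}$ already defines a $C_0$-semigroup on $L^2$ with $\lvert e^{-tM_V}\rvert_{B(L^2)}\le e^{\alpha t}$, and combining this with the hypothesis $\lvert e^{tA}\rvert\le e^{\gamma t}$ yields the uniform stability bound $\bigl\lvert(e^{(t/n)A}e^{-(t/n)M_V})^n\bigr\rvert\le e^{(\alpha+\gamma)t}$ for all $n$, which the paper takes as sufficient to apply Lie--Trotter directly (citing Engel--Nagel). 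Your truncation $V_k=V\wedge k$ followed by dominated convergence is a legitimate alternative that sidesteps having to check that $A-M_V$ itself generates a semigroup, at the cost of an extra limiting argument; the paper's version is shorter but leans more heavily on the cited semigroup theorem. Your remark about reconciling the pinned-endpoint measure with the free terminal coordinate is a fair bookkeeping point that the paper leaves implicit.
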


Note that this applies, in particular, if $A$ is an elliptic differential operator, provided it admits a fundamental solution \cite{evans2022partial,renardy2004introduction}. This is the case, for example, for any elliptic differential operator with constant coefficients  \cite{rudin1991functional}.

\begin{proof}
Since the exponential is an increasing function and we are assuming that $V$ is bounded from below, the functional integral formula theorem implies that the integral in the conclusion exists. Since $M_{V}$ is a product operator we have
\begin{equation*}
    (e^{tM_{V}}f)(x) = e^{V(x)}f(x),
\end{equation*}
furthermore,
\begin{align*}
    \nonumber|e^{tM_{V}}|_{B(L^{2})} &\leq \sup_{\vec{x}\in\mathbb{R}^{n}}|e^{-t V(\vec{x})}|\\
    \nonumber&= \mbox{\large\(e^{-t \inf_{\vec{x}\in\mathbb{R}^{n}}V(\vec{x})}\)}\\
    &= \mbox{\large\(e^{t \alpha}\)}
\end{align*}
with $\alpha = -\inf_{\vec{x}\in\mathbb{R}^{n}}V(\vec{x})$. This implies that
\begin{equation*}
    \left|\left(\mbox{\large\( e^{\frac{t}{n}A}e^{\frac{t}{n}M_{V}}\)}\right)^{n}\right|_{B(L^{2})} \leq e^{(\alpha + \gamma)t}
\end{equation*}
for every $n\in\mathbb{N}$ and $t\geq 0$. Thus the hypothesis for the Lie-Trotter product formula for $C_{0}$-semigroups are satisfied implying that
\begin{equation*}
    \mbox{\large\(e^{t(A+B)}\)} = \lim_{n\to\infty}\left(\mbox{\large\( e^{\frac{t}{n}A}e^{\frac{t}{n}B}\)}\right)^{n}
\end{equation*}
in the strong operator topology of $B(L^{2})$; see \cite{engel2000one}. Since
\begin{equation*}
    (e^{tM_{V}}f)(x) = e^{V(x)}f(x)
\end{equation*}
and
\begin{equation*}
    e^{tA}f(x) = \int_{\mathbb{R}^{n}} \varphi(y-x,t)f(y)\;dy,
\end{equation*}
we obtain the desired conclusion by substituting into the Lie-Trotter formula and applying the functional integral formula.
\end{proof}

The remaining results of this section will allow us to generalize the previous theorem. Even if one cares only about elliptic operators, the following results are of interest since they provide approximate expressions for the functional integral formula.

For the remainder of the section we will assume that $V$ is a continuous function bounded from below and define
\begin{equation*}
    \varphi_{V}(x-y,t) = \int_{C_{x,y;t}}\mbox{\Large\(e^{-\int_{0}^{t}V(\gamma(s))\;ds} \)}\;d\mu_{\varphi}^{x,y,t}(\gamma).
\end{equation*}

\begin{theorem}[Perturbative Expansion]
If $A$ is an operator in $L^{2}$ that generates a $C_{0}$-semigroup and admits a fundamental solution $\varphi$
then
\begin{equation*}
    \varphi_{V}(x-y,t) = \sum_{n=0}^{\infty}\frac{(-1)^{n}}{n!}\varphi_{V}^{n}(x-y,t),
\end{equation*}
where
\begin{align*}
    \varphi_{V}^{n}(x-y,t) = &\int_{0}^{t}\cdots\int_{0}^{t} \int \cdots \int \varphi(x-x_{n},t-s_{n})\\
    &\prod_{i=1}^{n}V(x_{i})\varphi(x_{i} - x_{i-1},s_{i}-s_{i-1})\;dx_{1}\ldots dx_{n}\;ds_{1}\ldots ds_{n}.
\end{align*}
\end{theorem}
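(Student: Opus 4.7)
My plan is to Taylor-expand the exponential inside the functional integral and evaluate the resulting polynomial moments of $\int_0^t V(\gamma(s))\,ds$ using the cylinder function formula of the Existence of Functional Measures theorem. Schematically, the proof is two applications of Fubini followed by one application of the cylinder formula and a symmetrization of the time arguments.

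First, I will write the pointwise identity
\[
e^{-\int_0^t V(\gamma(s))\,ds}=\sum_{n=0}^{\infty}\frac{(-1)^n}{n!}\Bigl(\int_0^t V(\gamma(s))\,ds\Bigr)^{n}
\]
and exchange the sum with the integral against $\mu_\varphi^{x,y;t}$. Justifying this exchange is the only genuinely analytic step: the partial sums are dominated by $e^{\int_0^t|V(\gamma(s))|\,ds}$, which is $\mu_\varphi^{x,y;t}$-integrable as soon as $V$ is bounded (the standing hypothesis of this section is boundedness of $V$ from below; one either supplements this with boundedness from above or with a Khasminskii-type integrability condition), after which dominated convergence gives
\[
\varphi_V(x-y,t)=\sum_{n=0}^{\infty}\frac{(-1)^n}{n!}\int_{C_{x,y;t}}\Bigl(\int_0^t V(\gamma(s))\,ds\Bigr)^{n} d\mu_\varphi^{x,y;t}(\gamma).
\]

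Next, for each fixed $n$ I apply Fubini to rewrite the $n$-th moment as
\[
\int_{[0,t]^n}\int_{C_{x,y;t}}\prod_{i=1}^n V(\gamma(s_i))\,d\mu_\varphi^{x,y;t}(\gamma)\,ds_1\cdots ds_n.
\]
For almost every $(s_1,\dots,s_n)$ the coordinates are distinct; letting $(t_1<\cdots<t_n)$ denote their increasing rearrangement, the map $\gamma\mapsto\prod_i V(\gamma(s_i))=\prod_i V(\gamma(t_i))$ is a cylinder function with shape $\widetilde F(u_1,\dots,u_n)=\prod_i V(u_i)$. The Existence of Functional Measures theorem then gives
\[
\int_{C_{x,y;t}}\prod_{i=1}^n V(\gamma(s_i))\,d\mu_\varphi^{x,y;t}(\gamma)=\int V(u_1)\cdots V(u_n)\prod_{j=1}^{n+1}\varphi(u_j-u_{j-1},t_j-t_{j-1})\,du_1\cdots du_n,
\]
with $t_0=0$, $t_{n+1}=t$ and $u_0,u_{n+1}$ taken as the endpoints attached to the measure.

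Finally, the right-hand side above is symmetric in $(s_1,\dots,s_n)$, because any relabeling of the $s_i$'s is absorbed by a corresponding relabeling of the dummy integration variables $u_i$ over $\mathbb{R}^n$. Under the convention that the integrand displayed in $\varphi_V^n(x-y,t)$ is understood as its time-ordered value extended symmetrically to $[0,t]^n$, this symmetry turns $\int_{[0,t]^n}$ into $n!$ times the integral over the simplex $\{0\le t_1<\cdots<t_n\le t\}$; the factor $n!$ cancels the $1/n!$ in the Taylor coefficient, and the $n$-th summand becomes precisely $\frac{(-1)^n}{n!}\varphi_V^n(x-y,t)$, as claimed.

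\textbf{Main obstacle.} The principal difficulty is the sum–integral exchange, since $\sum(-z)^n/n!$ is only absolutely convergent under the dominator $e^{|z|}$, which for unbounded $V$ need not be $\mu_\varphi^{x,y;t}$-integrable; for bounded $V$ it is routine, otherwise a Khasminskii-type integrability hypothesis is needed. The rest of the proof—evaluating each moment by the cylinder formula and reconciling the cube-vs-simplex integration via symmetrization—is combinatorial bookkeeping once that interchange is secured.
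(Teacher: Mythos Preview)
Your proof is essentially identical to the paper's: Taylor-expand the exponential, exchange sum and integral by dominated convergence, write each $n$-th power as an iterated time integral, swap with $d\mu_\varphi^{x,y;t}$ via Fubini, and evaluate the resulting cylinder function. You are in fact more careful than the paper on two points it glosses over---the dominator needed for the dominated convergence step (the paper simply asserts that boundedness of the limit $e^{-\int V}$ suffices, which is not quite the right justification) and the time-ordering required to apply the cylinder formula when integrating over the full cube $[0,t]^n$---so your symmetrization paragraph is not superfluous but makes explicit the convention under which the displayed formula for $\varphi_V^n$ is to be read.
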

\begin{proof}
Our hypothesis implies that the integrand of $\varphi_{V}$ is a bounded function on the variable $\gamma$. Since $\mu_{\varphi}^{x,y;t}$ is a finite measure in this measurable space it follows from the dominated convergence theorem that
\begin{align*}
    \varphi_{V}(x-y,t) &= \int_{C_{x,y;t}}\mbox{\Large\(e^{-\int_{0}^{t}V(\gamma(s))\;ds} \)}\;d\mu_{\varphi}^{x,y;t}(\gamma)\\
    &= \int_{C_{x,y;t}}\sum_{n=0}^{\infty}\frac{1}{n!}\left(-\int_{0}^{t}V(\gamma(s))\;ds\right)^{n}\;d\mu_{\varphi}^{x,y;t}(\gamma)\\
    &= \sum_{n=0}^{\infty}\frac{(-1)^{n}}{n!}\int_{C_{x,y;t}}\left(\int_{0}^{t}V(\gamma(s))\;ds\right)^{n}\;d\mu_{\varphi}^{x,y;t}(\gamma).
\end{align*}
The result will follow once we prove that
\begin{equation*}
    \int_{C_{x,y;t}}\left(\int_{0}^{t}V(\gamma(s))\;ds\right)^{n}\;d\mu_{\varphi}^{x,y;t}(\gamma) = \varphi_{V}^{n}(x-y,t).
\end{equation*}
We observe that
\begin{align}\label{Eqintegraln}
    \notag\left(\int_{0}^{t}V(\gamma(s))\;ds\right)^{n} &= \prod_{i=1}^{n}\int_{0}^{t}V(\gamma(s_{i}))\;ds_{i}\\
    &= \int_{0}^{t}\cdots\int_{0}^{t}\prod_{i=1}^{n} V(\gamma(s_{i}))\;ds_{1}\ldots ds_{n}.
\end{align}
From Fubini's theorem and the fact that $\gamma \longmapsto \prod_{i=1}^{n} V(\gamma(s_{i}))$ is a cylinder function we obtain
\begin{align*}
    \int_{C_{x,y;t}}\left(\int_{0}^{t}V(\gamma(s))\;ds\right)^{n}\;d\mu_{\varphi}^{x,y;t}(\gamma) = &\int_{C_{x,y;t}}\int_{0}^{t}\cdots\int_{0}^{t}\prod_{i=1}^{n} V(\gamma(s_{i}))\;ds_{1}\ldots ds_{n}\;d\mu_{\varphi}^{x,y;t}(\gamma)\\
    = &\int_{0}^{t}\cdots\int_{0}^{t}\int_{C_{x,y;t}}\prod_{i=1}^{n} V(\gamma(s_{i}))\;d\mu_{\varphi}^{x,y;t}(\gamma)\;ds_{1}\ldots ds_{n}\\
    = &\int_{0}^{t} \int \varphi(x-x_{n},t-s_{n})\\
    &\prod_{i=1}^{n}V(x_{i})\varphi(x_{i} - x_{i-1},s_{i}-s_{i-1})\;dx\;ds,
\end{align*}
from which we conclude the desired result.
\end{proof}

The previous perturbative expansion has the advantage that each term is easy to compute. For our theoretical considerations, however, the following corollary will be more useful.

\begin{corollary}
If $A$ is an operator in $L^{2}$ that generates a $C_{0}$-semigroup and admits a fundamental solution $\varphi$,
then
\begin{equation*}
    \varphi_{V}(x-y,t) = \sum_{n=0}^{\infty} \frac{(-1)^{n}}{n!}K_{V}^{n}(x-y,t),
\end{equation*}
where
\begin{align*}
K_{V}^{n}(x-y,t) =
n!&\int_{0}^{t}\int_{0}^{t_{1}}\cdots\int_{0}^{t_{n-1}} \int\cdots \int \varphi(x-x_{n},t-s_{n})\\
&\prod_{i=1}^{n}V(x_{i})\varphi(x_{i} - x_{i-1},s_{i}-s_{i-1})\;dx_{1}\ldots dx_{n}\;dt_{1}\ldots dt_{n}.
\end{align*}
\end{corollary}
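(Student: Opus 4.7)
The plan is to derive the corollary from the Perturbative Expansion Theorem by showing that the $n$th term $\varphi_{V}^{n}(x-y,t)$ coincides with $K_{V}^{n}(x-y,t)$, so that the two series match term by term.

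The key observation is that although the formula defining $\varphi_{V}^{n}$ has its time integrations over the full cube $[0,t]^{n}$, the inner spatial integral
\begin{equation*}
I(s_{1},\dots,s_{n})=\int\cdots\int\varphi(x-x_{n},t-s_{n})\prod_{i=1}^{n}V(x_{i})\,\varphi(x_{i}-x_{i-1},s_{i}-s_{i-1})\,dx_{1}\cdots dx_{n}
\end{equation*}
is naturally defined only for ordered times $0\le s_{1}\le\cdots\le s_{n}\le t$. For such ordered $s_{i}$, the proof of the Perturbative Expansion Theorem already shows that $I(s_{1},\dots,s_{n})$ equals $\int_{C_{x,y;t}}\prod_{i=1}^{n}V(\gamma(s_{i}))\,d\mu_{\varphi}^{x,y;t}(\gamma)$, since the product $\prod V(\gamma(s_{i}))$ is a cylinder function. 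Because this latter integral is visibly invariant under permutations of $(s_{1},\dots,s_{n})$, $I$ extends unambiguously to a symmetric function on the whole cube.

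I would then partition $[0,t]^{n}$, up to a Lebesgue-null set, into the $n!$ simplices indexed by $\pi\in S_{n}$ and defined by $s_{\pi(1)}\le\cdots\le s_{\pi(n)}$. By the symmetry of $I$ each simplex contributes the same integral, so
\begin{equation*}
\varphi_{V}^{n}(x-y,t)=n!\int_{0\le s_{1}\le\cdots\le s_{n}\le t}I(s_{1},\dots,s_{n})\,ds_{1}\cdots ds_{n},
\end{equation*}
which, after the routine renaming of the outer integration variables to the nested $t_{1},\dots,t_{n}$ used in the statement of the corollary, is precisely $K_{V}^{n}(x-y,t)$. The use of Fubini to swap the spatial and temporal integrations is justified because the integrand is bounded in $\gamma$ and $\mu_{\varphi}^{x,y;t}$ is a finite measure. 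Substituting back into the theorem then yields the claimed series.

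The argument is essentially symmetry plus Fubini, and I do not expect any serious obstacle. The only minor bookkeeping task is keeping the variable names straight between the cube and simplex parametrizations and checking that the decomposition into the $n!$ ordered regions covers the cube modulo a set of measure zero; both are standard.
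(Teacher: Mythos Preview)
Your proposal is correct and follows essentially the same idea as the paper: the key step in both is the identity that the integral of a symmetric function over the cube $[0,t]^{n}$ equals $n!$ times its integral over the ordered simplex, which you phrase via the simplex decomposition and the paper phrases as the formula $\left(\int_{0}^{t}V(\gamma(s))\,ds\right)^{n}=n!\int_{0}^{t}\int_{0}^{t_{1}}\cdots\int_{0}^{t_{n-1}}\prod_{i}V(\gamma(s_{i}))\,dt_{1}\cdots dt_{n}$, proved by induction. The only cosmetic difference is that the paper reruns the proof of the Perturbative Expansion Theorem with this identity substituted in, whereas you invoke the theorem and then convert $\varphi_{V}^{n}$ into $K_{V}^{n}$; the mathematical content is the same.
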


\begin{proof}
We follow the same procedure for the proof of the previous theorem, but instead of using equation (\ref{Eqintegraln}) we use the following formula
\begin{equation*}
    \left(\int_{0}^{t}V(\gamma(s))\;ds\right)^{n} = n! \int_{0}^{t}\int_{0}^{t_{1}}\cdots\int_{0}^{t_{n-1}}\prod_{i=1}^{n}V(\gamma(s_{i}))\;dt_{1}\ldots d_{t_{n}},
\end{equation*}
which is easily proven by induction. 
\end{proof}

From this perturbative expansion we obtain that the functions $K_{V}^{n}$ satisfy a recurrence relation. This will ultimately lead to an integral equation and the original partial differential equation.

\begin{lemma}[Recurrence Relation]
Let $A$ be an operator in $L^{2}$ that generates a $C_{0}$-semigroup and admits a fundamental solution $\varphi$. The previously defined functions $K_{V}^{n}$ satisfy the recurrence relation
\begin{equation*}
    K_{V}^{n+1}(x-y,t) = (n+1)\int_{0}^{t}\int \varphi(x-z,t-s)V(z) K_{V}^{n}(z-y,s)\;dz\;ds.
\end{equation*}
\end{lemma}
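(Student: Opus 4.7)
The plan is to verify the recurrence by direct manipulation of the explicit integral representation of $K_V^{n+1}$ supplied by the preceding corollary. I write $K_V^{n+1}(x-y,t)$ as $(n+1)!$ times a nested integration over $n+1$ ordered time variables $0\le t_1\le\cdots\le t_{n+1}\le t$ and over $n+1$ spatial variables $x_1,\ldots,x_{n+1}$, with integrand $\varphi(x-x_{n+1},t-t_{n+1})\prod_{i=1}^{n+1}V(x_i)\varphi(x_i-x_{i-1},t_i-t_{i-1})$ (and $x_0=y$, $t_0=0$). The central move is to apply Fubini's theorem to pull the outermost time integration---the one over the largest intermediate time $t_{n+1}$---and the corresponding spatial integration over $x_{n+1}$ to the outside. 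Renaming $s=t_{n+1}$ and $z=x_{n+1}$, the factor extracted from the integrand is precisely $\varphi(x-z,t-s)\,V(z)$, matching the kernel on the right-hand side of the recurrence.

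What remains under the extracted outer integral is a nested integral over the ordered simplex $0\le t_1\le\cdots\le t_n\le s$ and over the remaining spatial variables, whose integrand is $\prod_{i=1}^n V(x_i)\cdot\varphi(z-x_n,s-t_n)\prod_{i=1}^n\varphi(x_i-x_{i-1},t_i-t_{i-1})$. Here the boundary factor $\varphi(z-x_n,s-t_n)$ is the $i=n+1$ term of the original $\varphi$-product after the substitutions $t_{n+1}=s$ and $x_{n+1}=z$. By direct comparison with the formula of the corollary, this remaining integral is exactly $\tfrac{1}{n!}\,K_V^n(z-y,s)$, viewed as an $n$-insertion perturbed propagator from $y$ to $z$ in time $s$. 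The constant prefactors combine as $(n+1)!\cdot(1/n!)=n+1$, yielding the claimed recurrence.

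The main technical hurdle is the justification of Fubini's theorem for the interchange of integrations and for the subsequent recognition of the remaining nested integral as a copy of $K_V^n$. Absolute integrability of the full integrand is secured by the standing hypotheses of the section---continuity and lower-boundedness of $V$, integrability of the transition function $\varphi$, and the finiteness of the functional measure on $C_{x,y;t}$---which are the same conditions underlying the perturbative expansion and the preceding corollary. The remainder of the argument is essentially bookkeeping: confirming that, once the outer factor is peeled off, the integration domain and integrand left over match the defining formula of $K_V^n$ term by term.
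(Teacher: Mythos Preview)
Your proposal is correct and follows essentially the same approach as the paper's own proof: both start from the iterated integral formula for $K_V^{n+1}$, peel off the outermost time and space integrations corresponding to $(t_{n+1},x_{n+1})$, recognize the remaining inner integral as $\tfrac{1}{n!}K_V^n$, and combine the factorials to obtain the prefactor $n+1$. Your discussion of the Fubini justification is more explicit than the paper's, but the argument is otherwise identical.
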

\begin{proof}
From the definition of the $K_{V}^{n}$ we have that
\begin{align*}
    K_{V}^{n+1}(x-y,t) = &(n+1)!\int_{0}^{t}\int_{0}^{t_{1}}\cdots\int_{0}^{t_{n}} \int\cdots \int \varphi(x-x_{n+1},t-s_{n+1})\\
    &\;\prod_{i=1}^{n+1}V(x_{i})\varphi(x_{i} - x_{i-1},s_{i}-s_{i-1})\;dx_{1}\ldots dx_{n+1}\;dt_{1}\ldots dt_{n+1}\\
    = &(n+1)\int_{0}^{t} \int \varphi(x-x_{n+1},t-s_{n+1}) V(x_{n+1})\\
    &\bigg(n! \int_{0}^{t_{1}}\cdots\int_{0}^{t_{n}} \int\cdots \int \varphi(x_{n+1} - x_{n},s_{n+1}-s_{n})\\
    &\prod_{i=1}^{n}V(x_{i})\varphi(x_{i} - x_{i-1},s_{i}-s_{i-1})\;dx_{1}\ldots dx_{n}\;dt_{1}\ldots dt_{n}\bigg)dt_{n+1}\;dx_{n+1}\\
    = &(n+1)\int_{0}^{t} \int \varphi(x-x_{n+1},t-t_{n+1}) V(x_{n+1})K_{V}^{n}(x_{n+1}-y,t_{n+1})dt_{n+1}\;dx_{n+1}.
\end{align*}
Renaming variables we obtain the result.
\end{proof}

\begin{theorem}[Integral Equation]
Let $A$ be an operator in $L^{2}$ that generates a $C_{0}$-semigroup and admits a fundamental solution $\varphi$. The function $\varphi_{V}$ satisfies the integral equation
\begin{equation*}
    \varphi_{V}(x-y,t) = \varphi(x-y,t) - \int_{0}^{t}\int \varphi(x-z,t-s)V(z) \varphi_{V}(z,s)\;dz\;ds.
\end{equation*}
\end{theorem}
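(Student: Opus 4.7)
The plan is to derive the integral equation directly from the perturbative expansion in the preceding corollary together with the recurrence relation proved in the lemma immediately above. The first observation is that the $n=0$ summand of $\sum_{n\geq 0}\frac{(-1)^n}{n!}K_V^n(x-y,t)$ is $K_V^0(x-y,t)=\varphi(x-y,t)$: with no factors of $V$ and no iterated time integrations, the defining formula collapses to the bare kernel. Pulling this term out and re-indexing the tail via $m=n-1$ produces
\begin{equation*}
\varphi_V(x-y,t) \;=\; \varphi(x-y,t) \;-\; \sum_{m=0}^{\infty}\frac{(-1)^{m}}{(m+1)!}\,K_V^{m+1}(x-y,t).
\end{equation*}

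Next I would substitute the recurrence $K_V^{m+1}(x-y,t)=(m+1)\int_0^t\!\int\varphi(x-z,t-s)V(z)K_V^{m}(z-y,s)\,dz\,ds$. The factor $(m+1)$ cancels against the $(m+1)$ inside $(m+1)!$, reducing the series to
\begin{equation*}
\varphi_V(x-y,t) \;=\; \varphi(x-y,t) \;-\; \sum_{m=0}^{\infty}\frac{(-1)^{m}}{m!}\int_{0}^{t}\!\!\int \varphi(x-z,t-s)\,V(z)\,K_V^{m}(z-y,s)\,dz\,ds.
\end{equation*}
At this point the core step is to interchange the summation with the $dz\,ds$ integral so that the inner series re-assembles, via the perturbative expansion of the corollary applied to the endpoints $(z,y)$ at time $s$, into $\varphi_V(z-y,s)$. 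Once this swap is performed, the stated integral equation drops out, reading $\varphi_V(z,s)$ as shorthand for $\varphi_V(z-y,s)$ under the same endpoint convention used throughout the section.

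The main obstacle is therefore justifying this interchange. My plan is to invoke Fubini--Tonelli after exhibiting an absolutely convergent positive majorant. Because $V$ is continuous and bounded from below by some $-C$, the integrand $e^{-\int_0^t V(\gamma(s))\,ds}$ is bounded by $e^{Ct}$ uniformly in $\gamma\in C_{x,y;t}$, so each summand in the series is dominated by the corresponding summand obtained when $V$ is replaced by $|V|+C$. The resulting positive series is term-by-term bounded via the functional integral formula by $e^{Ct}\varphi(x-y,t)$, giving a finite Tonelli majorant. With this dominant in hand, Fubini legitimizes the interchange and simultaneously certifies that the inner series defining $\varphi_V(z-y,s)$ converges absolutely for almost every $(z,s)$, which is exactly what is needed to close the argument. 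Apart from this measure-theoretic bookkeeping, the proof is a direct manipulation of the series already constructed.
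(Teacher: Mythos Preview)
Your proposal is correct and follows essentially the same route as the paper: separate the $n=0$ term of the $K_V^n$ expansion, re-index, apply the recurrence to produce the factor $(m+1)$ that cancels against $(m+1)!$, swap sum and integral, and recognize the inner series as $\varphi_V$. The only difference is that you spend a paragraph justifying the sum--integral interchange via a Tonelli majorant, whereas the paper simply performs the swap without comment; your added care is reasonable but not part of the paper's argument.
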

\begin{proof}
The perturbative expansion and the recurrence relation imply that
\begin{align*}
    \varphi_{V}(x-y,t) &= \sum_{n=0}^{\infty} \frac{(-1)^{n}}{n!}K_{V}^{n}(x-y,t)\\
    &= \varphi(x-y,t) + \sum_{n=1}^{\infty} \frac{(-1)^{n}}{n!}K_{V}^{n}(x-y,t)\\
    &= \varphi(x-y,t) + \sum_{n=0}^{\infty} \frac{(-1)^{n+1}}{(n+1)!}K_{V}^{n+1}(x-y;t)\\
    &= \varphi(x-y,t) + \sum_{n=0}^{\infty} \frac{(-1)^{n+1}}{(n+1)!}(n+1)\int_{0}^{t}\int \varphi(x-z,t-s)V(z) K_{V}^{n}(z-y,s)\;dz\;ds\\
    &= \varphi(x-y,t) - \int_{0}^{t}\int  \varphi(x-z,t-s)V(z)\left(\sum_{n=0}^{\infty}\frac{(-1)^{n}}{n!} K_{V}^{n}(z-y,s)\right)\;dz\;ds\\
    &= \varphi(x-y,t) - \int_{0}^{t}\int \varphi(x-z,t-s)V(z) \varphi_{V}(z,s)\;dz\;ds.
\end{align*}
\end{proof}

We finally obtain the main result of the section.

\begin{theorem}[Funtional Integral Solution to a PDE. General case.]
Let $A$ be an operator in $L^{2}$ that generates a $C_{0}$-semigroup and admits a fundamental solution $\varphi$. Then the function
\begin{equation*}
    f(x,t) = \int_{C_{x,y;t}}\mbox{\Large\(e^{-\int_{0}^{t}V(\gamma(s))\;ds} \)} f_{0}(\gamma(s))\;d\mu_{\varphi}^{x,y;t}(\gamma)
\end{equation*}
is a solution to the abstract Cauchy problem
\begin{equation*}
    \partial_{t}f = A(f) - M_{V} (f)
\end{equation*}
with initial condition $f(x,0) = f_{0}(x)$. In particular, the operator $A + M_{V}$ admits a fundamental solution given by
\begin{equation*}
    \varphi_{V}(x-y,t) = \int_{C_{x,y;t}}\mbox{\Large\(e^{-\int_{0}^{t}V(\gamma(s))\;ds} \)}\;d\mu_{\varphi}^{x,y;t}(\gamma).
\end{equation*}
\end{theorem}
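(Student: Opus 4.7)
The plan is to take the integral equation for $\varphi_{V}$ established in the preceding theorem as the starting point, and derive the PDE directly by differentiating in $t$. Thus the key object is
\begin{equation*}
    \varphi_{V}(x-y, t) = \varphi(x-y,t) - \int_{0}^{t}\!\!\int \varphi(x-z, t-s)\,V(z)\,\varphi_{V}(z-y,s)\,dz\,ds,
\end{equation*}
and I would first show that $\varphi_{V}$ itself is a fundamental solution of $A - M_{V}$. The claim for general initial data then follows at once by writing
\begin{equation*}
    f(x,t) = \int \varphi_{V}(x-y,t)\,f_{0}(y)\,dy,
\end{equation*}
which is the form the functional integral takes after marginalizing over the endpoint $y$, and invoking linearity of the abstract Cauchy problem.

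The central step is the differentiation. Applying $\partial_{t}$ to the integral equation, the first summand gives $A\varphi$ by the definition of a fundamental solution. For the second summand I would use the Leibniz rule for the variable upper limit $t$. The boundary contribution at $s=t$ is
\begin{equation*}
    \int \varphi(x-z,0)\,V(z)\,\varphi_{V}(z-y,t)\,dz = V(x)\,\varphi_{V}(x-y,t) = M_{V}\varphi_{V},
\end{equation*}
since $\varphi(x-z,0) = \delta_{x}(z)$. For the interior term, $\partial_{t}\varphi(x-z,t-s) = (A\varphi)(x-z,t-s)$, and because $A$ has a continuous extension to $D'(\mathbb{R}^{n})$ acting only on the $x$ variable, it may be pulled outside the $(z,s)$ integral; applying the integral equation once more to identify that integral as $\varphi - \varphi_{V}$, the interior term equals $A\varphi - A\varphi_{V}$. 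Assembling the pieces,
\begin{equation*}
    \partial_{t}\varphi_{V} = A\varphi - M_{V}\varphi_{V} - (A\varphi - A\varphi_{V}) = (A - M_{V})\varphi_{V},
\end{equation*}
as required. The initial condition $\varphi_{V}(x-y,0) = \delta_{y}(x)$ is immediate, since the double integral vanishes at $t=0$ and $\varphi(\cdot,0) = \delta$.

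The main obstacle is the rigorous justification of the interchange of $\partial_{t}$, $A$, and integration in $(z,s)$. Pulling $A$ inside the integral relies precisely on the continuity of $A$ on $D'(\mathbb{R}^{n})$ that was built into the definition of a fundamental solution, while the Leibniz step and the Fubini-type rearrangements need $\varphi_{V}$ to be sufficiently regular in $(x,t)$; this regularity can be read off from the perturbative expansion and the uniform bounds that already underpin the proof of the integral equation. Once $\varphi_{V}$ is shown to be a fundamental solution of $A - M_{V}$, the formula for $f(x,t)$ solves the abstract Cauchy problem by linearity, with $f(x,0) = \int \delta_{y}(x)\,f_{0}(y)\,dy = f_{0}(x)$.
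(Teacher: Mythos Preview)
Your argument is correct and follows the same overall strategy as the paper---differentiate the integral equation for $\varphi_{V}$---but your handling of the Leibniz step is actually more complete. The paper, when applying $\partial_{t}$ to the double integral, records only the boundary contribution at $s=t$ and writes $\partial_{t}\varphi_{V} = A\varphi - V\varphi_{V}$; it then closes the gap by a separate structural argument that $A\varphi = A\varphi_{V}$, based on the observation that in the integral equation the second summand carries no dependence on $y$, so the action of $A$ in the $x$-variable and in the $y$-variable must agree up to sign. Your route instead retains the interior Leibniz term $\int_{0}^{t}\!\int (\partial_{t}\varphi)(x-z,t-s)V(z)\varphi_{V}(z-y,s)\,dz\,ds$, replaces $\partial_{t}\varphi$ by $A\varphi$, pulls $A$ outside, and recognises the remaining integral via the integral equation as $\varphi-\varphi_{V}$; the cancellation $A\varphi - (A\varphi - A\varphi_{V}) = A\varphi_{V}$ then falls out directly, without any appeal to the $y$-independence argument. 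The two computations are consistent (indeed $A\varphi = A\varphi_{V}$ is exactly the statement that your interior term vanishes), but your version is the more transparent one and avoids the somewhat delicate sign-and-variable reasoning that the paper invokes. The caveat you flag---that pulling $A$ through the $(z,s)$-integral needs the distributional continuity built into the definition of fundamental solution---is the right place to be careful, and it is no stronger an assumption than what the paper is implicitly using.
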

\begin{proof}
Differentiating the integral equation we obtain
\begin{align*}
    \partial_{t}\varphi_{V}(x-y,t) &= \partial_{t}\varphi(x-y,t) - \partial_{t}\int_{0}^{t}\int \varphi(x-z,t-s)V(z) \varphi_{V}(z,s)\;dz\;ds\\
    &= A\varphi(x-y,t) - \int \varphi(x-z,0)V(z) \varphi_{V}(z,t)\;dz\\
    &= A\varphi(x-y,t) - \int \delta_{x}(z) V(z) \varphi_{V}(z,t)\;dz\\
    &= A\varphi(x-y,t) - V(x) \varphi_{V}(x,t).
\end{align*}
To conclude we need to show that $A\varphi = A\varphi_{V}$. Once again, the integral equation implies that $\varphi_{V}$ is of the form
\begin{equation*}
    \varphi_{V}(x-y,t) = \varphi(x-y,t) + f(x,t),
\end{equation*}
where $f$ is the second term of the integral equation. This term notably has no dependence on $y$. Since the dependence of $\varphi_{V}$ on $x$ and $y$ is given in its first , we must have that any operator $A$ that acts on $x$ or $y$ must act on the same way on the other one, except, perhaps, for a minus sign in each of its summands. For simplicity we assume that $A$ has only one summand, which implies that 
\begin{equation*}
    A_{x}\varphi_{V} = \pm A_{y}\varphi_{V}.
\end{equation*}
From this it follows that
\begin{align*}
    A_{x}\varphi_{V} &= \pm A_{y}\varphi_{V}\\
    &= \pm A_{y}\varphi + A_{y}f(x,t)\\
    &= \pm A_{y}\varphi\\
    &= A_{x}\varphi.
\end{align*}
\end{proof}

\subsection{Complex Transition Functions}

Suppose now that a complex transition function $\varphi(x-y,t)$ is given. Since its functional measure $\mu_{\varphi}^{x,y;t}$ may not have a non-trivial restriction to $C_{x,y;t}$ the expression
\begin{equation*}
    \varphi_{V}(x-y,t) = \int_{\Omega_{x,y;t}}\mbox{\Large\(e^{-\int_{0}^{t}V(\gamma(s))\;ds} \)}\;d\mu_{\varphi}^{x,y;t}(\gamma)
\end{equation*}
is ill-defined. However, if $\varphi$ is the fundamental solution of an operator $A$ that satisfies the hypotheses of the Lie-Trotter product formula,  the expression
\begin{equation*}
    \lim_{N\to\infty}\int_{\mathbb{R}^{n}}\cdots \int_{\mathbb{R}^{n}}\mbox{\Large\(e^{\sum_{i=1}^{N}V(x_{i})\Delta t(N)}\)} \prod_{i=1}^{N+1} \varphi(x_{i}-x_{i-1},t_{i}-t_{i-1}) \;dx_{1}\cdots dx_{N}
\end{equation*}
is a fundamental solution for the abstract Cauchy problem
\begin{equation}\label{EqCauchyPerturbada}
    \partial_{t}f = A(f) - M_{V}(f).
\end{equation}
We introduce a sequence of cylinder functions $e_{V,n}\colon \Omega_{x,y;t} \to \mathbb{R}$ given by
\begin{equation*}
    e_{V,n}(\gamma) = \mbox{\Large\(e^{\sum_{k=1}^{n}V(\gamma(t_{k}))\Delta t(n)}\)},
\end{equation*}
where $t_{k} = k\frac{t}{n}$ and $\Delta t(n) = \frac{t}{n}$. We get that the fundamental solution of (\ref{EqCauchyPerturbada}) can be written in terms of $\mu_{\varphi}^{x,y;t}$ as
\begin{equation*}
    \varphi_{V}(x-y,t) = \lim_{n\to\infty} \int_{\Omega_{x,y;t}} e_{V,n}(\gamma)\;d\mu_{\varphi}^{x,y;t}(\gamma).
\end{equation*}
Note however, that the Lie-Trotter formula does not provide pointwise convergence, but convergence in $L^{2}$. In fact, this limit may not exist for different values of $x$, $y$ and $t$. However, it does exist except on a set of measure zero and defines an element of $L^{2}$.

This method for describing the fundamental solution as limits of integrals with respect to $\mu_{\varphi}^{x,y;t}$ is related to Donsker's flat integral used to deal with the Wiener measure \cite{kuo2006gaussian}.

This can be further generalized. If $\varphi$ is the fundamental solution of an operator $A$ and one can show that the limit
\begin{equation*}
    \lim_{n\to\infty} \int_{\Omega_{x,y;t}} e_{V,n}(\gamma)\;d\mu_{\varphi}^{x,y;t}(\gamma)
\end{equation*}
exists except on a set of measure zero as a distribution, then all the results of the previous section follow in the same manner. Particularly, the perturbative expansion and the integral and partial differential equations hold.

\subsection{Interpretation of Results}

Suppose that a particle in $\hat{\mathbb{R}}^{n}$ follows a natural motion which can not be described exactly. However, a probabilistic description can be achieved in the sense that the probability that the particle is at the point $x$ and after a time $t$ it is at the point $y$ is given by $\varphi(x-y,t)$. In this context, the compatibility equation
\begin{equation*}
    \int\varphi(x-y,t-s)\varphi(y-z,s-u)\;dy = \varphi(x-z,t-u)
\end{equation*}
simply represents that the position of the particle does not depend on previous positions at previous times, that is, the particle has "no memory" of its position.  Let a certain quantity that can be measured be defined on the space and be described by a function $f\colon \hat{\mathbb{R}}^{n} \to \mathbb{R}$. If the position of the particle could be known exactly as $\gamma(t)$ then the value of this quantity would be given as a function of time by $f(\gamma(t))$. If the particle moves from $x$ to $y$ in a time $t$ then the average position of the particle in this period of time is
\begin{equation*}
    \frac{1}{t}\int_{0}^{t}\gamma(s)\;ds
\end{equation*}
and the average value of the quantity $f$ is given by
\begin{equation*}
    \frac{1}{t}\int_{0}^{t}f(\gamma(s))\;ds,
\end{equation*}
which can be approximated, rather roughly, by
\begin{equation*}
    f\left(\frac{1}{t}\int_{0}^{t}\gamma(s)\;ds\right).
\end{equation*}
Since we are assuming that the position $\gamma(t)$ can not be known exactly, the average value of the quantity $f$ in this period of time is
\begin{equation*}
    \int f\left(\frac{1}{t}\int_{0}^{t}\gamma(s)\;ds\right) \;d\mu_{\varphi}^{x,y;t}(\gamma).
\end{equation*}
Thus, the knowledge of the probability $\varphi(x-y,t)$ is enough to know the value of different quantities associated to the particle.

Let us now suppose that the natural motion of the particle is modified by the introduction of a potential $V(x)$. Subject to this modification, the probability that the particle moves from $x$ to $y$ in a time $t$ is now given by
\begin{equation*}
    \varphi_{V}(x-y,t) = \int_{C_{x,y;t}}\mbox{\Large\(e^{-\int_{0}^{t}V(\gamma(s))\;ds} \)}\;d\mu_{\varphi}^{x,y;t}(\gamma),
\end{equation*}
except by constant given by normalization. Thus $\varphi_{V}(x-y,t)$ represents the new probability for the motion of the particle. Furthermore, the average value of any measureable quantity $f$ is given by
\begin{equation*}
    \int_{C_{x,y;t}}f(\gamma(t))\;\mbox{\Large\(e^{-\int_{0}^{t}V(\gamma(s))\;ds} \)}\;d\mu_{\varphi}^{x,y;t}(\gamma).
\end{equation*}

According to the perturbative expansion, we have that
\begin{equation*}
    \varphi_{V}(x-y,t) = \sum_{n=0}^{\infty}\frac{(-1)^{n}}{n!}\varphi_{V}^{n}(x-y,t).
\end{equation*}
Thus, the probability $\varphi_{V}(x-y,t)$ of the particle moving from $x$ to $y$ in time $t$ is given by summing the contributions $\varphi_{V}^{n}(x-y,t)$. From this, the interpretation of the $n$-th summand $\varphi_{V}^{n}(x-y,t)$ is that it represents the probability of a particle moving under the natural motion of the space and interacting $n$-times with the potential in the period of time $t$, that is, the particle is scattered $n$ times in this period of time.

\begin{figure}[t]
\centering
\includegraphics{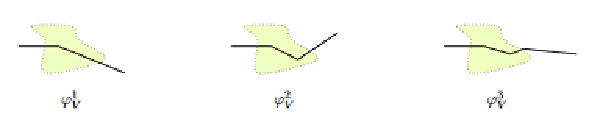}
\caption{Interpretation of $\varphi_{V}^{n}$ as the probability to be scattered $n$ times.}
\end{figure}

\section{Examples of Functional Measures}
\label{sec:examples}

\subsection{Diffusion Equation}

As a first example we consider the diffusion equation
\begin{equation}\label{EqDif}
    \frac{\partial f}{\partial t} = D\;\nabla^{2} f - V f,
\end{equation}
Where $D > 0$ is constant known as the diffusion coefficient. In this case the operator $A$ is given by
\begin{equation*}
    A(f) = D\;\nabla^{2} f.
\end{equation*}
By means of Fourier transform it is found that the fundamental solution $\varphi$ of this operator is
\begin{equation*}
    \varphi(x-y,t) = \frac{1}{\sqrt{4\pi Dt}} \mbox{\Large\(e^{-\frac{|x-y|^{2}}{4Dt}}\)}.
\end{equation*}
This function is positive, thus it restricts to the space of continuous paths. Therefore,  integrals with respect to the obtained measure $\mu_{\varphi}^{x,y;t}$ are computed using the functional integral formula, from which we obtain that
\begin{align*}
    &\int_{\Omega_{x,y;t}} f\left(\int_{0}^{t}V(\gamma(s))\;ds\right)\;d\mu_{\varphi}^{x,y;t}(\gamma)\\ = &\lim_{N\to\infty} \frac{1}{\sqrt{4\pi Dt}}\int \cdots\int f\left(\sum_{k=1}^{N}V(x_{k})\Delta t(N)\right) \mbox{\Large\(e^{-\sum_{i=1}^{N+1}\frac{|x_{i}-x_{i-1}|^{2}}{4D\Delta t(N)}}\)}\;dx_{1}\cdots dx_{N}.
\end{align*}
In particular, if $f(s) = e^{-s}$ we have that
\begin{equation*}
    \varphi_{V}(x-y,t) = \lim_{N\to\infty} \frac{1}{\sqrt{4\pi Dt}}\int \cdots\int \mbox{\Large\(e^{-\sum_{i=1}^{N+1}\left(V(x_{i})\Delta t(N) + \frac{|x_{i}-x_{i-1}|^{2}}{4D\Delta t(N)}\right)}\)}\;dx_{1}\cdots dx_{N}.
\end{equation*}

If $V$ is bounded from below then the function $\varphi_{V}$ is well defined and it is the fundamental solution of the original diffusion equation (\ref{EqDif}). The obtained measure $\mu_{\varphi}^{x,y;t}$ is known as the Wiener measure and denoted by $\mathcal{W}$.

\subsection{$n$-th Derivative Operator}

As a second example we consider the differential equation
\begin{equation}\label{EqDn}
    \frac{\partial f}{\partial t} = \frac{\partial^{n} f}{\partial x^{n}} - V f.
\end{equation}
In this case the operator $A$ is given by the $n$-th partial derivative,
\begin{equation*}
    A(f) = \frac{\partial^{n} f}{\partial x^{n}}.
\end{equation*}
By means of Fourier transform it is found that the fundamental solution $\varphi$ of this operator is
\begin{equation*}
    \varphi(x-y,t) = \frac{1}{2\pi} \int \mbox{\Large\(e^{ik(x-y) - t(ik)^{n}}\)}\;dk,
\end{equation*}
where the integral is defined as a distribution. It follows that integration of cylinder functions with respect to the functional measure $\mu_{\varphi}^{x,y;t}$ is given by
\begin{equation*}
    \int_{\Omega_{x,y;t}} F(\gamma)\;d\mu_{\varphi}^{x,y;t}(\gamma) = \frac{1}{(2\pi)^{N}}\int \int \tilde{F}(x_{1},\ldots,x_{N})\;\mbox{\Large\(e^{ik_{i}(x_{i}-x_{i-1}) - \Delta t(N) (ik_{i})^{n} }\)}\;dx_{i}\;dk_{i}.
\end{equation*}
Since the fundamental solution may not be real, let alone positive, we introduce the sequence of cylinder functions $e_{n}\colon \Omega_{x,y;t} \to \mathbb{R}$ given by
\begin{equation*}
    e_{V,n}(\gamma) = \mbox{\Large\(e^{\sum_{k=1}^{n}V(\gamma(t_{k}))\Delta t(n)}\)},
\end{equation*}
where $t_{k} = k\frac{t}{n}$ and $\Delta t(n) = \frac{t}{n}$. The fundamental solution of (\ref{EqDn}) in terms of $\mu_{\varphi}^{x,y;t}$ is then
\begin{equation*}
    \varphi_{V}(x-y,t) = \lim_{n\to\infty} \int_{\Omega_{x,y;t}} e_{V,n}(\gamma)\;d\mu_{\varphi}^{x,y;t}(\gamma).
\end{equation*}
Once again, this limit may not exist for different values of $x$, $y$ and $t$, however, it does exist except on a set of measure zero and defines an element of $L^{2}$.

\subsection{Fokker-Planck Equation}

As a third example we consider the Fokker-Planck equation
\begin{equation*}
    \frac{\partial f}{\partial t} = \frac{1}{2}\sum_{i,j=1}^{n}Q_{ij}(x) \frac{\partial^{2}f}{\partial x_{i} \partial x_{j}} + \sum_{i=1}^{n}\rho_{i}(x) \frac{\partial f}{\partial x_{i}} + V(x) f,
\end{equation*}
where for every fixed $x$ the matrix $(Q_{ij})_{i,j\leq n}$ is symmetric and positive definite. In this case the operator $A$ is given by
\begin{equation*}
    A(f) = \frac{1}{2}\sum_{i,j=1}^{n}Q_{ij}(x) \frac{\partial^{2}f}{\partial x_{i} \partial x_{j}} + \sum_{i=1}^{n}\rho_{i}(x) \frac{\partial f}{\partial x_{i}}.
\end{equation*}
Under very mild regularity conditions on $Q$ and $\rho$ it can be proven that $A$ generates a $C_{0}$-semigroup. If $A$ also admits a fundamental solution $\varphi$ then there exists a functional measure $\mu_{\varphi}^{x,y;t}$ such that the fundamental solution of the Fokker-Planck equation is given by $\varphi_{V}$. We cannot give an explicit form for $\varphi_{V}$ unless we compute $\varphi$ which can only be done if the functions $Q$ and $\rho$ are given explicitly. However, it is enough to show that our construction generalizes the MSRJD integral. In one spatial dimension the Fokker-Planck equation can be written as
\begin{equation*}
    \frac{\partial f}{\partial t} = \frac{1}{2}Q(x) \frac{\partial^{2}f}{\partial x^{2}} + \rho(x) \frac{\partial}{\partial x}\left(\rho(x)  f\right) + V(x) f.
\end{equation*}
In this case the MSRJD integral is usually written in the notation of physics as
\begin{equation*}
    \varphi_{V} = \int \mbox{\Large\(e^{-\int L\;dt}\)}\;\mathcal{D}\gamma
\end{equation*}
where
\begin{equation*}
    L = \frac{1}{2 Q(\gamma)} \dot{\gamma}^{2} + \dot{\gamma}\frac{\rho(\gamma)}{Q(\gamma)} + \frac{\rho(\gamma)^{2}}{2 Q(\gamma)}
\end{equation*}
and the symbol $\mathcal{D}\gamma$ does not indicate an integral with respect to a measure, rather than some sort of summation must be done along all paths in a certain space. Our construction shows that this summation is the integral with respect to the measure $\mu_{\varphi}^{x,y;t}$.

\subsection{General Lagrangian}

In physics, the study of functional integration consists on associating to any Lagrangian $L(\gamma)$ an expression of the form $\int e^{-\int L\;dt}\mathcal{D}\gamma$. This is usually done by discretizing the variables of the Lagrangian and computing a limit over an arbitrary number of integrals, similar to the functional integral formula. The simplest Lagrangian for particles is of the form
\begin{equation*}
    L(\gamma,\dot{\gamma}) = \frac{1}{2}\dot{\gamma}^{2} - V(\gamma),
\end{equation*}
which is discretized using the following substitutions
\begin{align*}
    \gamma &\longmapsto x_{i},\\
    \dot{\gamma}_{i} &\longmapsto \frac{x_{i} - x_{i-1}}{t_{i} - t_{i-1}}.
\end{align*}
It is then easy to see that each part of the Lagrangian corresponds to a fundamental solution of an operator.

This  discretization method does not work for more complex Lagrangians. The problem lies on finding the proper discretization for functions of derivatives. A partial solution for this may be as follows: to terms in the Lagrangian of the form $\dot{\gamma}_{i}^{n}$ associate the operator $\partial_{i}^{n}$. Extend linearly to find the associated operator to a Lagrangian containing summands of this form, excluding the potential term. The correct discretization is then given by the fundamental solution of this operator.

\section{Extension to Paths in Infinite Dimensional Spaces}
\label{sec:inf}

\subsection{Transition Functions and Measures}

In the construction of functional measures in $\hat{\mathbb{R}}^{n}$ we only used a few general properties of this space. Because of this, we will extend the previous results to more general spaces. In place of $\hat{\mathbb{R}}^{n}$, we will assume that $(X,\tau)$ is a compact Hausdorff topological group and will denote the group operation as sum. Define
\begin{equation*}
    \Omega_{t} = \prod_{i\in [0,t]}X
\end{equation*}
with its product topology $\tau_{\prod}$. It follows from Tychonoff's theorem that $(\Omega,\tau_{\prod})$ is a compact Haussdorf space. For fixed $x,y\in X$ we define
\begin{equation*}
    \Omega_{x,y;t} = \{\gamma\in\Omega_{t}\;|\;\gamma(0) = x,\;\gamma(t) = y\}.
\end{equation*}
By a simple net argument one can verify that $\Omega_{x,y;t}$ is closed, thus also compact and Hausdorff in the subspace topology, which we will also denote by $\tau_{\prod}$. Let us consider the Borel $\sigma$-algebra $\sigma\left(\tau_{\prod}\right)$ of this space. We will work on the measurable space $\left(\Omega_{x,y;t},\sigma\left(\tau_{\prod}\right)\right)$. Assume that $\nu$ is a given measure in this measurable space. We say that a collection of integrable functions $\{\varphi_{s}(x) = \varphi(x,s)\;|\;s\in [0,t]\}$ is a $\nu$-\textbf{transition function} if it satisfies the compatibility condition
\begin{equation*}
    \int\varphi(x-y,v-s)\varphi(y-z,s-u)\;d\nu(y) = \varphi(x-z,v-u).
\end{equation*}
Every transition function defines a measure $\mu_{\varphi}^{x,y;t}$ in exactly the same way as before. The proof of the existence of functional measures is formally identical to the proof in $\hat{\mathbb{R}}^{n}$ since it does not rely on any properties of the space other than it being compact and Hausdorff. The full measure of continuous paths is similar but more complicated.

\begin{theorem}[Full Measure of Continuous Paths]
If $\varphi$ is a positive transition function on $(X,\tau)$ and
\begin{equation}\label{EqFullMeasureCondition2}
    \lim_{\delta\to 0}\int_{X\setminus V}\varphi(x_{2} - x_{1},\delta)\varphi(x_{1} - x,s)\;dx_{1} = 0
\end{equation}
for each $s\in [0,t]$, $x_{2}\in X$ and $V$ is a neighbourhood of $x_{2}$, then the set of elements in $\Omega_{x,y;t}$ that are continuous at every point has full measure respect to $\mu_{\varphi}^{x,y;t}$.
\end{theorem}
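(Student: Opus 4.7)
The plan is to mimic the earlier proof of the Full Measure of Continuous Paths theorem in $\hat{\mathbb{R}}^n$, replacing the metric condition ``$|x_1-x_2|<\epsilon$'' by the uniform condition ``$(x_1,x_2)\in U$'' for a symmetric open entourage $U\subset X\times X$ of the canonical uniformity on the compact Hausdorff topological group $X$. Since $[0,t]$ is compact, a path $\gamma:[0,t]\to X$ is continuous if and only if it is uniformly continuous for this uniformity, so the set $D\subset\Omega_{x,y;t}$ of discontinuous paths can be written as
\begin{equation*}
D=\bigcup_{U}\bigcap_{\delta>0}\omega(U,\delta),\qquad \omega(U,\delta)=\{\gamma:\exists\, r,s\in[0,t],\,|r-s|<\delta,\,(\gamma(r),\gamma(s))\notin U\},
\end{equation*}
with $U$ ranging over a base of entourages. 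Assuming $X$ is second countable (equivalently metrizable as a compact Hausdorff group) this base can be chosen countable and $\delta$ restricted to reciprocals of natural numbers, so continuity of the finite measure $\mu_{\varphi}^{x,y;t}$ reduces the claim $\mu_{\varphi}^{x,y;t}(D)=0$ to showing that $\lim_{\delta\to 0}\mu_{\varphi}^{x,y;t}(\omega(U,\delta))=0$ for each fixed $U$.

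For each finite ordered $F=\{t_i\}_{i=1}^{N}\subset [0,t]$ with $|t_i-t_{i-1}|<\delta$ I would define the cylinder set $\omega(U;F)=\{\gamma:\exists\, i\le N+1,\,(\gamma(t_i),\gamma(t_{i-1}))\notin U\}$. The class $\mathcal{C}_\delta$ of all such sets is closed under finite unions and has union $\omega(U,\delta)$, so by inner regularity $\mu_{\varphi}^{x,y;t}(\omega(U,\delta))=\sup_{F}\mu_{\varphi}^{x,y;t}(\omega(U;F))$. I then apply the cylinder formula to $\omega(U;F)^{c}$ and expand via the combinatorial identity
\begin{equation*}
1-\prod_{i=1}^{N+1}\chi_U(i,i-1)=\sum_{\sigma\in\{0,1\}^{N+1}\setminus\{0\}}\prod_{i=1}^{N+1}\chi_{\sigma(i)}(i,i-1),\qquad \chi_0=\chi_U,\ \chi_1=\chi_{U^c},
\end{equation*}
taken verbatim from the previous proof. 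Renaming dummy variables in each summand to place a $\chi_{U^c}(x_1,x_0)$ factor in front and majorizing the rest by $1$ produces, for some $t_1\in(0,\delta)$, a bound of the form
\begin{equation*}
\mu_{\varphi}^{x,y;t}(\omega(U;F))\le \int_{(x_1,x)\notin U}\varphi(y-x_1,t-t_1)\,\varphi(x_1-x,t_1)\,d\nu(x_1).
\end{equation*}

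The main obstacle will be to rewrite this bound in the form required by hypothesis (\ref{EqFullMeasureCondition2}). Using the group structure I would pick a symmetric open neighborhood $W$ of the identity with $(u,v)\in U$ whenever $u-v\in W$; then $(x_1,x)\notin U$ implies $x_1\notin x+W=:V$, a neighborhood of $x$. The integral is thereby restricted to $X\setminus V$ and matches the structure of (\ref{EqFullMeasureCondition2}) with $x_2=x$ and small-time variable $t_1<\delta$, so it vanishes as $\delta\to 0$. A secondary delicate point is the measurability of $D$ and the passage through a countable entourage base: in the absence of second countability one must either restrict to the metrizable case or invoke outer regularity and work one $U$ at a time, since the outer union in $D$ would otherwise be uncountable. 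Granting metrizability the argument closes exactly as in the finite-dimensional theorem.
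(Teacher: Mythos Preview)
Your proposal is correct in outline but takes a genuinely different route from the paper. The paper does \emph{not} pass through the uniform structure and entourages; instead it indexes directly by open sets $V\in\tau$ and defines
\[
\omega(V,\delta)=\{\gamma\in\Omega_{x,y;t}:\exists\,r,s\in[0,t],\ |r-s|<\delta,\ \gamma(r)\in V,\ \gamma(s)\notin V\},
\]
so that $D=\bigcup_{V\in\tau}\bigcap_{\delta>0}\omega(V,\delta)$. The key difference is how the outer collection is handled: rather than assuming a countable base (your metrizability assumption), the paper invokes inner regularity of $\mu_{\varphi}^{x,y;t}$ to write $\mu_{\varphi}^{x,y;t}(D)=\sup_{V\in\tau}\lim_{\delta\to 0}\mu_{\varphi}^{x,y;t}(\omega(V,\delta))$, thereby claiming the result for an arbitrary compact Hausdorff group without second countability. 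In the cylinder-set computation the paper takes $B=\{(x_1,x_2):x_1\in V,\ x_2\notin V\}$, so the final bound lands directly as an integral over $X\setminus V$ and matches hypothesis~(\ref{EqFullMeasureCondition2}) without your entourage-to-neighbourhood conversion step.

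What each approach buys: your uniform-structure argument is the natural transplant of the $\hat{\mathbb{R}}^n$ proof and makes the measurability of $D$ and the monotonicity of $\omega(U,\delta)$ in $U$ transparent, at the honest cost of restricting to metrizable $X$. The paper's approach is advertised as working in full generality, and it has the practical advantage that the combinatorial bound is already in the ``$X\setminus V$'' form required by the hypothesis; on the other hand, the passage $\mu_{\varphi}^{x,y;t}(D)=\sup_{V}\mu_{\varphi}^{x,y;t}(\bigcap_{\delta}\omega(V,\delta))$ is stated rather tersely (different discontinuous paths may require different $V$'s, so one must argue that any compact $K\subset D$ sits inside a single $\bigcap_{\delta}\omega(V,\delta)$, or else use a finite-subcover argument), so your caution about the uncountable union is well placed.
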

\begin{proof}
Define $D$ to be the set of paths in $\Omega_{x,y;t}$ that are discontinuous in at least one point and for $V\in \tau$ and $\delta > 0$ define $\omega(V,\delta)$ to be the set of paths in $\Omega_{x,y;t}$ for which there exist $r,s\in [0,t]$ such that $|r-s| < \delta$ and $\gamma(r)\in V$ but $\gamma(s)\notin V$. It follows that
\begin{equation*}
    D = \bigcup_{V\in \tau}\bigcap_{\delta > 0} \omega(V,\delta).
\end{equation*}
The intersection can be made countable by taking $\delta$ to be the reciprocal of a natural. To deal with the union we will use inner regularity. Using inner regularity and continuity of measure we have that
\begin{equation*}
    \mu_{\varphi}^{x,y;t}(D) = \sup_{V\in\tau}\;\lim_{\delta\to 0} \mu_{\varphi}^{x,y;t}(\omega(V,\delta)).
\end{equation*}
Then, it is enough to show that $\lim_{\delta\to 0} \mu_{\varphi}^{x,y;t}(\omega(\epsilon,\delta)) = 0$ for any $V\in\tau$. The rest of the proof follows in a similar manner as the previous case. For any finite and increasingly ordered subset $F = \{t_{i}\}_{i=1}^{N}$ such that $|t_{i} - t_{i-1}| < \delta$ we define $\omega(V;F)$ to be the set of elements $\gamma\in\Omega_{x,y;t}$ for which there exists $i\leq N$ such that $\gamma(t_{i})\in V$ but $\gamma(t_{i-1}) \notin V$. Denote by $\mathcal{C}_{\delta}$ the class of all such sets. Then
\begin{equation*}
    \omega(V,\delta) = \bigcup_{\omega\in\mathcal{C}_{\delta}}\omega(V;F).
\end{equation*}
Inner regularity of $\mu_{\varphi}^{x,y;t}$ implies that
\begin{equation*}
    \mu_{\varphi}^{x,y;t}(\omega(V,\delta)) = \sup_{\omega\in\mathcal{C}_{\delta}} \mu_{\varphi}^{x,y;t}(\omega(V;F)).
\end{equation*}
This shows that it is enough to verify that $\mu_{\varphi}^{x,y;t}(\omega(\epsilon,\delta;r,s))$ vanishes as $\delta \to 0$. The cumbersome computation of the uniform bound for the measure of these sets is done in exactly the same way as before, with $B = \{(x_{1},x_{2})\in X\times X\;|\;x_{1}\in V,\;x_{2}\notin V\}$. We will not repeat it here.
\end{proof}

The proof we gave for the functional integral formula also holds in this case.

\subsection{Distributions in Normed Spaces}

In the finite dimensional case, transition functions arise as fundamental solutions of operators in $L^{2}(\hat{\mathbb{R}}^{n})$. This relies on properties of distributions, which require special properties of $\mathbb{R}^{n}$. Many topological spaces are inadequate for generalizing distributions, since they lack a notion of differentiability. Smooth finite dimensional manifolds, however, admit reasonable generalizations. Infinite dimensional normed spaces have a suitable differentiability notion, but face the problem of not being locally compact. The same space with its weak topology is not locally compact but admits an exhaustion by compact sets if the space is reflexive. In doing this, one faces the additional difficulty of having problems with the domains of smooth functions. In the following generalization of distributions to normed spaces we show that these difficulties can be dealt with by carefully using both spaces.

Assume $(X,|\cdot|)$ is a reflexive normed space. If $\tau_{\omega}$ is its weak topology and $B_{X}$ its closed unit ball, then $(B_{X},\tau_{\omega})$ is a compact  Hausdorff topological space. This space is not a topological group, since the sum of two elements may not be in $B_{X}$. However, the previous results apply if we restrict the sum to be defined only if the sum of two elements remains in $B_{X}$ (it is still not a group). The advantage of this setting is that differentiability is well defined, either in the sense of Frechet or Gateaux. We define $f\colon X \to \mathbb{R}$ to be \textbf{continuously Gateaux differentiable} of order $k$ if it is Gateaux differentiable of order $k$ and for every $l\leq k$ its $l$-th Gateaux derivative $D f\colon X \to \mathbb{R}^{X^{l}}$ is weakly continuous and $l$-linear at each point. For $K\subset X$ define $C_{c}^{k}(K)$ to be the set of continuously Gateaux differentiable functions with support in $K$ and
\begin{equation*}
    C_{c}^{\infty}(K) = \bigcap_{k\in\mathbb{N}}C_{c}^{k}(K).
\end{equation*}
The set $C_{c}^{\infty}(B_{X})$ is not properly defined since it has empty interior in $(X,\tau_{\omega})$, however, its interior is non-empty in $(X,|\cdot|)$, which shows that it is properly defined as a set and so are all of the Gateaux derivatives of its elements. For $n\in\mathbb{N}$ equip each $C_{c}^{\infty}(nB_{X})$ with the countable collection of seminorms
\begin{equation*}
    p_{n,k}(f) = \sup_{x\in nB_{X}} |D_{x}^{k}f|.
\end{equation*}
For each $n\in\mathbb{N}$ we will denote the topology generated by these seminorms as $\tau_{n}$. Since each collection is countable and $p_{n,0}$ is actually a norm it follows that each collection is separating and thus 
$\tau_{n}$ is given by a translation invariant metric. Since $f_{m} \xrightarrow[]{\tau_{n}} f$ if and only if $p_{n,k}(f_{m} \to f) \to 0$ for each $k\in\mathbb{N}$, it follows that each $C_{c}^{\infty}(nB_{X})$ is actually a Frechet space. Using the Arzela-Ascoli and the mean value theorems, it can be shown that each of these spaces has the Heine-Borel property. Notice that a sequence $(f_{m})_{m\in\mathbb{N}}$ in $C_{c}^{\infty}(X)$ satisfies $f_{m} \xrightarrow[]{\tau_{n}} f$ if and only if the sequence and $f$ belong to this space and $p_{n,k}(f_{n} - f) \to 0$ for each $k\in\mathbb{N}$. It follows that for $n \leq l$ the inclusions $C_{c}^{\infty}(nB_{X}) \hookrightarrow  C_{c}^{\infty}(lB_{X})$ are continuous and $C_{c}^{\infty}(nB_{X}) \cap \tau_{l} = \tau_{n}$.

Our previous considerations show that the hypotheses of the Diudonné-Schwarz theorem are satisfied. From this we obtain many properties, which we list using the notation $D(X) = C_{c}^{\infty}(X)$.

\begin{enumerate}
    \item There exists a locally convex Hausdorff topology $\tau_{D}$ in $D(X)$ such that $\tau_{D} \cap C_{c}^{\infty}(nB_{X}) = \tau_{n}$.

    \item $A \subset D(X)$ is $\tau_{D}$-bounded if and only if there exists an $n\in\mathbb{N}$ such that $B \subset C_{c}^{\infty}(nB_{X})$ and $B$ is $\tau_{n}$-bounded.

    \item A sequence $(f_{m})_{m\in\mathbb{N}}$ in $D(X)$ converges to $f$ with respect to $\tau_{D}$ if and only if the union of the supports of every $f_{m}$ is compact and $D^{k}f_{m} \xrightarrow[]{unif} D^{k}f$ on this set for each $k\in\mathbb{N}$.

    \item A linear functional $\Lambda\colon (D(X),\tau_{D}) \to \mathbb{R}$ is continuous if and only if it is sequentially continuous.
\end{enumerate}

The last two propierties show that $D'(X) = (D(X),\tau_{D})^{\ast}$ is a suitable candidate to generalize the theory of distributions on the reflexive space $(X,|\cdot|)$. This space is of course equipped with its weak-$\ast$ topology, that is, the topology of pointwise convergence. The space $D'(X)$ clearly contains the set of Borel measures of $(X,\tau_{\omega})$, in particular, the Dirac $\delta$ distributions are defined.

From this we can consider a Borel measure $\nu$ in $(X,\tau_{\omega})$ and an operator $A\in B(L^{2}(X,\sigma(\tau_{\omega}),\nu))$. It makes sense to define a fundamental solution to the abstract Cauchy problem
\begin{equation*}
    \partial_{t} f = A(f)
\end{equation*}
in the same way we did before. Furthermore, our previous results imply that if the operator $A$ generates a $C_{0}$-semigroup and admits a fundamental solution then this fundamental solution is a transition function.

This method, however, has several shortcomings. First, we do not know how easy it is for an operator in an $L^{2}(X)$ to admit a fundamental solution. Second, when defining the spaces $C_{c}^{\infty}(K)$ we imposed that every Gateaux derivative must be weakly continuous, which may be too strong to admit a reasonable amount of functions.

\section{Concluding remarks}
\label{sec:con}

In this work, we generalize the notion of Wiener measure by associating it with certain differential operators. The main idea is simple and is based upon the use of the relationship that exists between fundamental solutions, transition functions, and functional measures. 

First, we present an exact definition of the fundamental solution of an operator in terms of a functional integral. Then, we show that for certain operators the fundamental solution is also a transition function in the sense of Kolmogorov's compatibility condition. Then, we show that in general, any transition function defines a functional measure on a space of rough paths. These two facts induce the relationship between operators and functional measures defined, however, on rough paths.

The next step consists in showing that there exists a set of continuous paths that is also compatible with the functional measure of the fundamental solution. To show this, it is necessary to assume that the transition function satisfies a rapid decay condition, satisfied by every fundamental solution. We then use a series development to compute general functional integrals, where each of the terms of the series can be easily calculated, giving rise to the possibility of computing it explicitly or approximately. The interesting byproduct of this representation of the functional integral is that said integral is actually the solution of an abstract Cauchy problem.

We used the above results to derive explicit solutions in terms of functional integrals for well-known equations, such as the diffusion and the Fokker-Planck equations. We also analyzed the case of the nth-derivative operator and studied the one-dimensional general Lagrangian of classical mechanics.

We also outlined some thoughts about the possibility of generalizing the above result to include functional measures with continuous paths in infinite dimensional spaces. Preliminary results show that the analysis can be carried out by considering infinite-dimensional distributions, but some problems appear when trying to define operators that allow the existence of fundamental solutions. This is a task for future investigations.

\section*{Acknowledgements}
This work was supported by 
UNAM PASPA–DGAPA.\\

\noindent \textbf{Data Availability} Data sharing is not applicable to this article as no datasets were generated or analyzed
during the study.\\

\noindent
\textbf{Declarations}\\

\noindent
\textbf{Conflict of interest} The authors declare that they have no conflict of interest.

\end{document}